\newcommand{\un}{\underline}
\newcommand{\be}{\begin{equation}}
\newcommand{\ee}{\end{equation}}
\newcommand{\ben}{\begin{equation*}}
\newcommand{\een}{\end{equation*}}
\newcommand{\mc}{\mathcal}
\newcommand{\rda}{(\rho^2-D)\alpha}
\newcommand{\rdabar}{(\rho^2-D)\bar{\alpha}}
\newcommand{\rdabara}{\rho^2 \bar{\alpha} + D \alpha}
\newcommand{\e}{\epsilon}
\newcommand{\mcb}{\mathcal{B}_{M,L}}
\newcommand{\bfs}{\mathbf{S}}
\newcommand{\tbfs}{\tilde{\bfs}}
\newcommand{\bfsh}{\mathbf{\hat{S}}}
\newcommand{\abs}[1]{\lvert#1\rvert}
\newcommand{\norm}[1]{\lVert#1\rVert}
\newcommand{\expec}{\mathbb{E}}
\newcommand{\mbf}{\mathbf}
\newcommand{\mck}{\mathcal{K}}
\newcommand{\mckc}{\mathcal{K}^c}
\newtheorem{lem}{Lemma}
\newtheorem{defi}{Definition}
\newtheorem{thm}{Theorem}
\newtheorem{fact}{Fact}
\newtheorem{corr}{Corollary}
\begin{document}
\title{The Rate-Distortion Function and Excess-Distortion Exponent of  Sparse Regression Codes \\ with Optimal Encoding}
\author{Ramji Venkataramanan,~\IEEEmembership{Senior Member,~IEEE,}
and Sekhar Tatikonda,~\IEEEmembership{Senior Member,~IEEE}

\thanks{This work was partially supported by a Marie Curie Career Integration Grant (Grant Agreement Number 631489) and by NSF Grant CCF-1217023. This paper was presented in part at the 2014 IEEE International Symposium on Information Theory.}%
\thanks{R.~Venkataramanan is with the Department of Engineering, University of Cambridge, Cambridge CB2 1PZ, UK (e-mail: ramji.v@eng.cam.ac.uk).}%
\thanks{S. Tatikonda is with the Department of Statistics and Data Science, Yale University, New Haven CT 06511, USA (e-mail: sekhar.tatikonda@yale.edu).}
%
\vspace{-10pt}
}
\maketitle

\begin{abstract}
This paper studies the  performance of sparse regression codes  for lossy compression with the squared-error distortion criterion. In a sparse regression code,  codewords are linear combinations of subsets of columns of a design matrix. It is shown that with minimum-distance encoding, sparse regression codes achieve the Shannon rate-distortion function for i.i.d. Gaussian sources  $R^*(D)$ as well as the optimal excess-distortion exponent. This completes a previous result which showed that $R^*(D)$  and the optimal exponent were achievable for distortions below a certain threshold. The proof of the rate-distortion result is based on the second moment method, a popular technique to show that a non-negative random variable $X$ is strictly positive with high probability. In our context, $X$ is the number of codewords  within target  distortion $D$ of the source sequence. We first identify the reason behind the failure of the standard second moment method for certain distortions, and illustrate the different failure modes via a stylized example. We then use a refinement of the second moment method to show that $R^*(D)$ is achievable for all distortion values. Finally, the refinement technique is applied to Suen's correlation inequality to prove the achievability of the optimal Gaussian excess-distortion exponent. 
\end{abstract}

\begin{IEEEkeywords}
Lossy compression,  sparse superposition codes, rate-distortion function, Gaussian source, error exponent, second moment method, large deviations \end{IEEEkeywords}

\section{Introduction}

\IEEEPARstart{D}{eveloping} practical codes for lossy compression at rates approaching Shannon's rate-distortion bound has long been an important goal in information theory. A practical compression code requires  a codebook with
low storage complexity  as well as encoding and decoding with low computational complexity.  Sparse Superposition Codes or Sparse Regression Codes (SPARCs) are a recent class of codes introduced by Barron and Joseph, originally for communcation over the AWGN channel \cite{AntonyML, AntonyFast}. They were subsequently used for lossy compression with the squared-error distortion criterion in \cite{KontSPARC,RVGaussianML, RVGaussFeasible}.  The codewords in a SPARC are linear combinations of columns of a design matrix $\mbf{A}$. The storage complexity of the code is proportional to the size of the matrix, which is polynomial in the block length $n$. A computationally efficient encoder for  compression with SPARCs  was proposed in   \cite{RVGaussFeasible} and shown to achieve rates approaching the Shannon rate-distortion function for i.i.d. Gaussian sources.

In this paper, we study the compression performance of SPARCs with the squared-error distortion criterion under optimal (minimum-distance) encoding. We show that for any ergodic source with variance $\sigma^2$, SPARCs with optimal encoding achieve a rate-distortion trade-off  given by $R^*(D) := \tfrac{1}{2} \log \tfrac{\sigma^2}{D}$. Note that $R^*(D)$ is the optimal rate-distortion function for an i.i.d. Gaussian source with variance $\sigma^2$.  The performance of SPARCs with optimal encoding was first studied in \cite{RVGaussianML}, where it was shown that for any distortion-level $D$, rates greater than
\be R_0(D) := \max \left\{ \frac{1}{2} \log  \frac{\sigma^2}{D}, \, \left(1 - \frac{D}{\sigma^2} \right)\right\} \label{eq:rsp_def} \ee
are achievable with the optimal Gaussian excess-distortion exponent. The  rate $R_0(D)$ in \eqref{eq:rsp_def} is equal to $R^*(D)$ when $\frac{D}{\sigma^2} \leq x^*$, but is strictly larger than $R^*(D)$ when $\frac{D}{\sigma^2} > x^*$, where $x^* \approx 0.203$; see Fig. \ref{fig:rd_ml_perf}. In this paper, we complete the result of \cite{RVGaussianML} by proving that sparse regression codes achieve the Gaussian rate-distortion function $R^*(D)$ for all distortions $D \in (0, \sigma^2)$. We also show that these codes attain the optimal excess-distortion exponent for i.i.d. Gaussian sources at all rates.

\begin{figure}[t]
\centering
\includegraphics[width=3.25in]{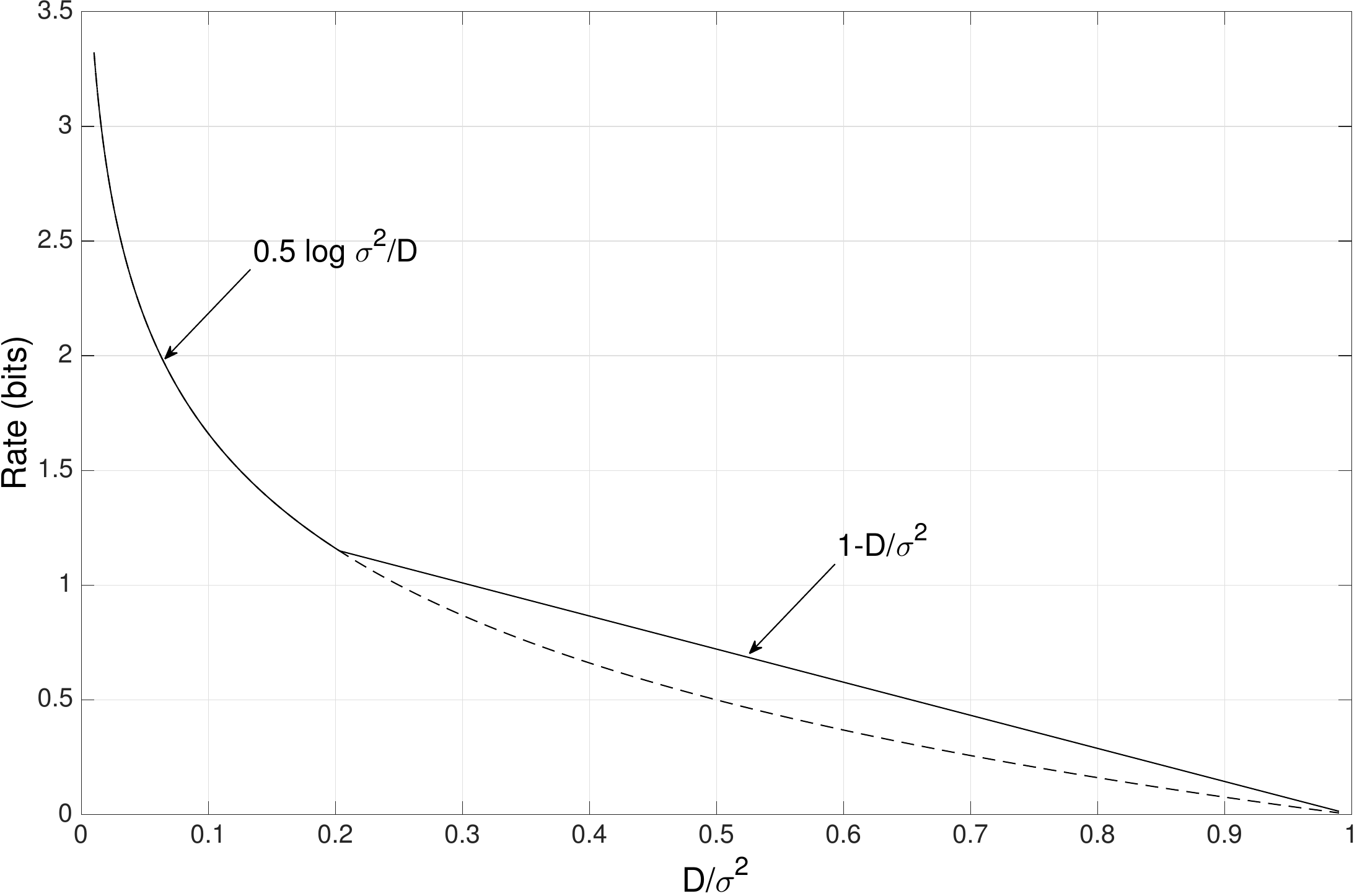}
\caption{\small{The solid line shows the previous achievable rate $R_0(D)$, given  in  \eqref{eq:rsp_def}. The rate-distortion function $R^*(D)$ is shown in dashed lines. It coincides with $R_0(D)$ for $D/\sigma^2 \leq x^*$, where $x^* \approx 0.203$.} }
\label{fig:rd_ml_perf}
\end{figure}

Though minimum-distance encoding is not practically feasible (indeed, the main motivation for sparse regression codes is that they enable low-complexity encoding and decoding), characterizing  the rate-distortion function and excess-distortion exponent under optimal encoding establishes a benchmark to compare the performance of various computationally efficient encoding schemes. Further, the results of this paper and \cite{RVGaussianML} together show that SPARCs retain the good covering properties of the i.i.d. Gaussian random codebook, while having a compact representation in terms of a matrix whose size is a low-order polynomial in the block length.

 Let us specify some notation before proceeding. Upper-case letters are used to denote random variables, and lower-case letters for their realizations. Bold-face letters are used to denote random vectors and matrices.  All vectors  have length $n$. The  source sequence  is $\bfs := (S_1, \ldots, S_n)$, and the reconstruction sequence is $\bfsh := (\hat{S}_1, \ldots, \hat{S}_n)$. $\norm{\mathbf{x}}$ denotes the $\ell_2$-norm of vector $\mathbf{x}$, and
$\abs{\mathbf{x}} =  \tfrac{\norm{\mathbf{x}}}{\sqrt{n}}$  is the normalized version.
$\mc{N}(\mu,\sigma^2)$ denotes the Gaussian distribution with mean $\mu$ and variance $\sigma^2$. Logarithms are with base $e$ and rate is measured in nats, unless otherwise mentioned. The notation $a_n \sim b_n$ means that $\lim_{n \to \infty} \tfrac{1}{n} \log a_n = \lim_{n \to \infty} \tfrac{1}{n} \log b_n $, and  w.h.p is used to abbreviate the phrase `with high probability'. We will use $\kappa, \kappa_1, \kappa_2$ to denote generic positive constants whose exact value is not needed. 

\subsection{SPARCs with Optimal Encoding} \label{sec:sparcdef}
A sparse regression code  is defined in terms of a design matrix $\mathbf{A}$ of dimension $n \times ML$ whose entries are i.i.d. $\mathcal{N}(0,1)$.
Here $n$ is the block length and $M$ and $L$ are integers whose values will be specified in terms of $n$ and the rate $R$.  As shown in Fig. \ref{fig:sparserd}, one can think of the matrix $\mathbf{A}$ as composed of $L$ sections with $M$ columns each. Each codeword is a linear combination of $L$ columns, with one column from each section.
Formally, a codeword can be expressed as  $\mathbf{A} \beta$, where $\beta$ is  an $ML \times 1$ vector $(\beta_1, \ldots, \beta_{ML})$ with the following property:  there is exactly one non-zero $\beta_i$ for  $1 \leq i \leq M$, one non-zero $\beta_i$ for $M+1 \leq i \leq 2M$, and so forth.  The non-zero values of $\beta$ are all set equal to $ \frac{c}{\sqrt{L}}$ where $c$ is a constant that will be specified later.  Denote the set of all $\beta$'s that satisfy this property by $\mcb$.

\begin{figure}[t]
\centering
\includegraphics[width=3.25in]{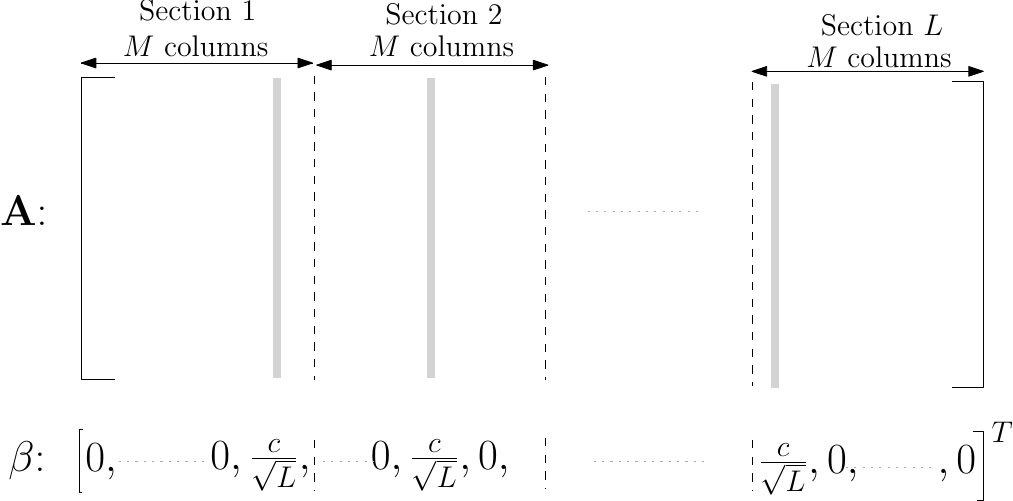}
\caption{\small{$\mathbf{A}$ is an $n \times ML$ matrix and $\beta$ is a $ML \times 1$ binary vector. The positions of the non-zeros in $\beta$  correspond to the gray columns of $\mathbf{A}$ which combine to form the codeword $\mathbf{A}\beta$.}}
\vspace{-5pt}
\label{fig:sparserd}
\end{figure}

\emph{Minimum-distance encoder}: This is defined by a mapping $g: \mathbb{R}^n \to \mcb$. Given the source sequence $\bfs$, the encoder determines the $\beta$ that produces the codeword closest in Euclidean distance, i.e.,
 \[ g(\bfs) = \underset{\beta \in \mcb}{\operatorname{argmin}} \ \norm{\bfs - \mathbf{A}\beta}.\]

\emph{Decoder}: This is a mapping $h: \mcb \to \mathbb{R}^n$. On receiving ${\beta} \in \mcb$ from the encoder, the decoder produces reconstruction $h(\beta) = \mathbf{A}\beta$.

Since there are $M$ columns in each of the $L$ sections, the total number of codewords is $M^L$. To obtain a compression rate of $R$ nats/sample, we therefore need
\be
M^L = e^{nR}.
\label{eq:ml_nR}
\ee
 For our constructions, we choose $M=L^b$ for some $b>1$ so that \eqref{eq:ml_nR} implies
\be   L \log L = \frac{nR}{b}. \label{eq:rel_nL} \ee
Thus $L$ is $\Theta\left( n/\log n \right)$, and the number of columns $ML$ in the dictionary $\mathbf{A}$ is  $\Theta\left(\left(n/\log n \right)^{b+1}\right)$, a {polynomial}  in $n$.

\subsection{Overview of our Approach}
To show that a rate $R$ can be achieved at distortion-level $D$, we need to show that with high probability at least one of  the $e^{nR}$ choices for $\beta$  satisfies
\be \abs{ \bfs  - \mbf{A} \beta }^2 \leq D. \label{eq:within_D}\ee
If $\beta$ satisfies \eqref{eq:within_D}, we call it  a \emph{solution}.

Denoting the number of solutions by $X$, the goal is to show that $X > 0$ with high probability when $R > R^*(D)$. Note that $X$ can be expressed as the sum of $e^{nR}$ indicator random variables, where the $i$th indicator is $1$ if $\beta(i)$ is a solution and zero otherwise, for $1 \leq i \leq e^{nR}$.
Analyzing the probability $P(X > 0)$ is challenging  because these indicator random variables are \emph{dependent}:  codewords $\mbf{A}\beta(1)$ and $\mbf{A}\beta(2)$ will be dependent if $\beta(1)$ and $\beta(2)$ share common non-zero terms. To handle the dependence,  we use the second moment method (second MoM), a technique commonly used to prove existence (`achievability') results in random graphs and random constraint satisfaction problems \cite{AlonSpBook}. In the setting of lossy compression, the second MoM was used in \cite{WainManeva10} to obtain the rate-distortion function of  LDGM codes for binary symmetric  sources with Hamming distortion.

For any non-negative random variable $X$,  the second MoM\cite{JansonBook} bounds the probability of the event $X > 0$ from below as\footnote{The inequality \eqref{eq:2nd_mom} follows from the Cauchy-Schwarz inequality $ (\expec[X Y])^2 \leq \expec X^2 \, \expec Y^2$ by substituting $Y=\mathbf{1}_{\{X>0\}}$.}
\be
P(X > 0) \geq \frac{(\expec X)^2}{\expec [ X^2]}.
\label{eq:2nd_mom}
\ee

Therefore   the second MoM succeeds if we can show that $(\expec X)^2/\expec [ X^2] \, \to \, 1$ as $ n \to \infty$.  It was shown in  \cite{RVGaussianML} that the second MoM succeeds for $R > R_0(D)$, where $R_0(D)$ is defined in \eqref{eq:rsp_def}. In contrast, for $R^*(D)< R < R_0(D)$  it was found that $(\expec X)^2/\expec [ X^2] \, \to \, 0$, so the second MoM fails. From this result in \cite{RVGaussianML}, it is not  clear whether the gap from $R^*(D)$ is due to an inherent weakness of the sparse regression codebook, or if it is just a limitation of the second MoM as a proof technique. In this paper, we demonstrate that it is the latter, and refine the second MoM to prove that all rates greater than
$R^*(D)$ are achievable.

Our refinement of the second MoM is inspired by  the  work of Coja-Oghlan and Zdeborov\'{a} \cite{CojaZdeb12} on finding sharp thresholds for  two-coloring of random hypergraphs. The high-level idea is as follows.  The key ratio $(\expec X)^2/\expec [ X^2]$ can be expressed as $(\expec X)/\expec [ X(\beta)]$, where $X(\beta)$ denotes the total number of solutions conditioned on the event that a given $\beta$ is a solution. (Recall that $\beta$ is a solution if $\abs{ \bfs  - \mbf{A} \beta }^2 \leq D$.) Thus when the  second MoM fails, i.e. the ratio goes to zero, we have a situation where the expected number of solutions is much smaller than the expected number of solutions \emph{conditioned} on the event that $\beta$ is a solution. This happens because for any $\mbf{S}$, there are atypical realizations of the design matrix that yield a very large number of solutions.  The total probability of these matrices is small enough that $\expec X$ in not significantly affected by these realizations. However, conditioning on $\beta$ being a solution increases the probability that the realized design matrix is one that yields an unusually large number of solutions. At low rates, the conditional probability of the design matrix being atypical is large enough to make $\expec [ X(\beta)] \gg \expec X$, causing the second MoM to fail.\footnote{This is similar to the inspection paradox in renewal processes.}

The key to rectifying the second MoM failure is to show that $X(\beta) \approx \expec X$ {with high probability} \emph{although} $\expec [ X(\beta)]  \gg \expec X$. We then apply the second MoM to count just the `good' solutions, i.e., solutions $\beta$ for which $X(\beta) \approx \expec X$. This succeeds,  letting us conclude that $X > 0$ with high probability.

\subsection{Related Work} As mentioned above, the second moment method was used in  \cite{WainManeva10} to analyze the rate-distortion function of LDGM codes for binary symmetric sources with Hamming distortion.
The idea of applying  the second MoM to a random variable that counts just the `good' solutions was recently used to obtain improved thresholds for problems such as random hypergraph 2-coloring \cite{CojaZdeb12}, $k$-colorability of random graphs \cite{coja13kcol},  and random $k$-SAT
\cite{coja13kSAT}. However, the key step of showing that a given solution is `good' with high probability depends heavily on the geometry of the problem being considered. This step requires identifying a specific property  of the random object being considered (e.g., SPARC design  matrix, hypergraph, or boolean formula)  that leads to a very large number of solutions  in atypical realizations of the object. For example, in SPARC compression, the atypical realizations are design matrices with columns that are unusually well-aligned with the source sequence to be compressed; in random hypergraph $2$-coloring, the atypical realizations  are  hypergraphs with an edge structure that allows an unusually large number of vertices to take on either color \cite{CojaZdeb12}.

It is interesting to contrast the analysis of SPARC lossy compression with that of SPARC AWGN channel coding in \cite{AntonyML}.  The dependence structure of the SPARC codewords makes the analysis  challenging in both problems, but the techniques required to analyze SPARC channel coding are very different from those used here for the excess distortion analysis. In the channel coding case, the authors use  a modified union bound together with a novel bounding technique for the probability of pairwise error events  \cite[Lemmas 3,4]{AntonyML} to establish that the error probability decays exponentially for all rates smaller than the channel capacity. In contrast, we use a refinement of the second moment method for the rate-distortion function, and Suen's correlation inequality to obtain the excess-distortion exponent.

Beyond the excess-distortion exponent, the \emph{dispersion}  is another quantity of interest in a lossy compression problem \cite{IngberKochman,KostinaV12}. For a fixed excess-distortion probability, the dispersion specifies how fast the rate can approach the rate-distortion function with growing block length. It was shown that for discrete memoryless and i.i.d. Gaussian sources, the optimal dispersion was equal to the inverse of the second derivative of the excess-distortion exponent. Given that SPARCs attain the optimal excess-distortion exponent, it would be interesting to explore if they also achieve the optimal dispersion for i.i.d. Gaussian sources with squared-error distortion. 

The rest of the paper is organized as follows. The main results, specifying the rate-distortion function and the excess-distortion expoenent of SPARCs, are stated in Section \ref{sec:main_result}. In Section \ref{sec:2mom_fail},  we set up the proof and show why the second MoM fails for $R < (1- \tfrac{D}{\rho^2})$. As the proofs of the main theorems are technical, we motivate the main ideas with a stylized example in Section \ref{subsec:toy_example}.  The main results are proved in Section \ref{sec:proof}, with the proof of the main technical lemma given in Section \ref{sec:proof_lem}.

\section{Main Results} \label{sec:main_result}

 The probability of excess distortion at distortion-level $D$ of a rate-distortion code $\mathcal{C}_n$ with block length $n$ and encoder and decoder mappings $g,h$  is
\be P_{e}(\mathcal{C}_n, D) = P\left(\abs{\bfs - h(g(\bfs))}^2 > D\right). \label{eq:pedef} \ee
For a SPARC generated as described in Section \ref{sec:sparcdef}, the probability measure in \eqref{eq:pedef} is with respect to the random source sequence $\bfs$ and the random design matrix $\mbf{A}$.

\subsection{Rate-Distortion Trade-off of SPARC} \label{subsec:sparc_rd}
\begin{defi}
A rate $R$ is achievable at distortion level $D$  if there exists a sequence of SPARCs $\{\mathcal{C}_n\}_{n=1,2,\ldots}$ such that
$\lim_{n \to \infty} P_{e}(\mathcal{C}_n, D) =0$
where for all $n$, $\mathcal{C}_n$ is a rate $R$ code defined by an $n \times L_n M_n$ design matrix whose parameter $L_n$ satisfies \eqref{eq:rel_nL} with a fixed $b$ and $M_n=L_n^b$.
\end{defi}

\begin{thm}
Let $\bfs$ be drawn from an ergodic source with mean $0$ and variance $\sigma^2$. For $D \in (0, \sigma^2)$,  let $R^*(D) = \tfrac{1}{2} \log \tfrac{\sigma^2}{D}$. Fix $R > R^*(D)$ and $b > b_{min}(\frac{\sigma^2}{D})$, where  
\be 
\begin{split}
&  b_{min} \left(x \right) =  \\
 &  \frac{20 R \, x^4}{ \left(1 + \frac{1}{x} \right)^2\left(1 -\frac{1}{x}\right) \left[ -1 + \left( 1 +  
\frac{2 \sqrt{x}}
{(x -1)} \left(R -\frac{1}{2}(1-\frac{1}{x}) \right)\right)^{1/2}\right]^2}
\end{split}
\label{eq:bmin_def}
\ee
for $1 < x \leq e^{2R}$. Then there exists a sequence of rate $R$  SPARCs $\{\mc{C}_n\}_{n=1,2 \ldots}$  for which $\lim_{n \to \infty} P_{e}(\mathcal{C}_n, D) =0$, where $\mathcal{C}_n$ is defined by an $n  \times L_n M_n$ design matrix, with $M_n =L_n^b$ and $L_n$ determined by \eqref{eq:rel_nL}.
\label{thm:ml_result}
\end{thm}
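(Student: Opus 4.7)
The plan is to apply a refined second moment method to the count $X$ of $\beta \in \mcb$ satisfying the solution condition $\abs{\bfs - \mbf{A}\beta}^2 \leq D$. By the Paley--Zygmund inequality (\ref{eq:2nd_mom}), it suffices to show $(\expec X)^2/\expec[X^2] \to 1$. I would work on the high-probability ergodic event $\abs{\bfs}^2 \in [\sigma^2 - \e, \sigma^2 + \e]$ and choose the non-zero entries of $\beta$ so that $c^2 = \sigma^2 - D$. Then for each fixed $\beta \in \mcb$, $\mbf{A}\beta$ is $\mc{N}(0, c^2 I_n)$ and independent of $\bfs$, and a Cram\'er-type Gaussian large deviation estimate yields $\expec X \sim e^{n(R - R^*(D))} \to \infty$ whenever $R > R^*(D)$.

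The vanilla second moment method, however, fails for $R^*(D) < R < R_0(D)$. Writing $\expec[X^2]/(\expec X)^2 = \expec[X(\beta_0)]/\expec X$ for a fixed $\beta_0$, the failure is caused by a rare set of design matrices whose columns on the support of $\beta_0$ are unusually well aligned with $\bfs$: these matrices contribute negligibly to $\expec X$ but dominate the conditional count $\expec[X(\beta_0)]$, an inspection-paradox phenomenon. To fix this, I would define a \emph{goodness} predicate: call $\beta \in \mcb$ good if, for every section $\ell$, the per-section statistics $\abs{\mbf{A}_{j_\ell}}^2$, $\tfrac{1}{n}\langle \mbf{A}_{j_\ell}, \bfs\rangle$, and $\tfrac{1}{n}\langle \mbf{A}_{j_\ell}, \bfs - \mbf{A}\beta\rangle$ lie in a small window around their typical planted values. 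Let $X_{\mathrm{g}}$ be the number of good solutions; since $\{X > 0\} \supseteq \{X_{\mathrm{g}} > 0\}$, it is enough to run the second moment method on $X_{\mathrm{g}}$.

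Two steps remain. First, $\expec X_{\mathrm{g}} = (1-o(1))\expec X$: conditioned on a fixed $\beta$ being a solution, each of the $L$ i.i.d.\ per-section statistics concentrates around a single value, so all $L$ windows are simultaneously satisfied with probability $1 - o(1)$. Second, and this is the main obstacle, one must show $\expec[X_{\mathrm{g}}^2] \leq (1+o(1))(\expec X_{\mathrm{g}})^2$. I would decompose $\expec[X_{\mathrm{g}}^2]$ by the overlap $U = \#\{\ell : j_\ell = j'_\ell\}$ between pairs $(\beta, \beta')$. For each overlap level $u L$ with $u \in (0,1]$, the goodness constraint on both $\beta$ and $\beta'$ forces the unshared columns of $\beta'$ to behave, in the directions relevant to the solution condition, like fresh Gaussian draws independent of $\beta$. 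This is precisely the property that fails for arbitrary solutions and inflates $\expec[X^2]$. Executing the decomposition yields an exponential sum of the form $(\expec X_{\mathrm{g}})^2 \sum_u \exp(-n \phi_b(u))$ with an explicit rate function $\phi_b$.

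The final step, which produces the specific numerical threshold $b_{\min}(\sigma^2/D)$ in (\ref{eq:bmin_def}), is to show that $\phi_b(u) > 0$ for all $u \in (0,1]$ whenever $b > b_{\min}(\sigma^2/D)$. The combinatorial factor $\binom{L}{uL}(M-1)^{(1-u)L}$ (using $M = L^b$ and $L\log L = nR/b$) contributes a positive rate that grows with $1-u$, while the joint tail probability of two overlapping good solutions contributes a negative exponential computable from the covariance structure induced by the shared columns. Optimizing the balance between the two in $u$ pinpoints the required threshold; the hardest piece is uniform positivity at small $u$, where the combinatorial blow-up nearly matches the probabilistic decay. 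Once $\phi_b > 0$ on $(0,1]$, the sum over $u$ is $1+o(1)$, so $P(X_{\mathrm{g}} > 0) \to 1$ and $\lim_n P_{e}(\mc{C}_n, D) = 0$, as claimed.
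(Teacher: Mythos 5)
Your high-level architecture is the same as the paper's: identify the inspection-paradox failure of the direct second moment method, restrict to a count $X_{\mathrm g}$ of ``good'' solutions, show $\expec X_{\mathrm g}=(1-o(1))\expec X$ and $\expec[X_{\mathrm g}^2]\le(1+o(1))(\expec X_{\mathrm g})^2$, and conclude via Paley--Zygmund.  The concrete definition of ``good'' is where you diverge from the paper, and where there is a genuine gap.

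You define goodness via per-section window constraints on the column statistics $\abs{\mbf{A}_{j_\ell}}^2$, $\tfrac1n\langle\mbf{A}_{j_\ell},\bfs\rangle$, $\tfrac1n\langle\mbf{A}_{j_\ell},\bfs-\mbf{A}\beta\rangle$.  The paper instead defines $\beta$ to be $\e$-good if the number of solutions overlapping $\beta$, $\sum_{\alpha}X_\alpha(\beta)$, is below $\e\,\expec X$.  That definition is tailored so that the bound $\expec[X\mid V_1=1]\le(1+\e)\expec X$ holds \emph{by construction} (Lemma \ref{lem:Xg}(b)); all the real work is pushed into showing a solution is good with high probability (Lemma \ref{lem:good}), which is done via the subset-level event $\mc{F}_\alpha(\beta)$: for every $\mc{K}\subset\{1,\dots,L\}$ with $|\mc{K}|=\alpha L$, the partial residual satisfies $\abs{\tbfs-\mbf{A}\beta_\mc{K}}^2\ge D_\alpha$, where $D_\alpha$ is defined by the rate-function equation $R\alpha=f(\rho^2,(\rho^2-D)\alpha,D_\alpha)$.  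This requires a union bound over all $\binom{L}{\alpha L}$ subsets, and the threshold $D_\alpha$, Lemma \ref{lem:dalph_ub0}, Lemma \ref{lem:gap_lem}, and a conditional Markov inequality are what actually produce the explicit $b_{\min}$ in \eqref{eq:bmin_def}.

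The gap in your plan is the assertion that per-section windows on both $\beta$ and $\beta'$ force the unshared columns of $\beta'$ to behave like fresh independent Gaussians.  What controls the conditional probability that an overlapping $\beta'$ is a solution is the partial residual $\abs{\tbfs-\mbf{A}\beta_\mc{K}}^2$ for the \emph{shared} subset $\mc{K}$, and this is a subset-level quantity.  Per-section control within windows of width $w$ only constrains subset sums to within $O(\alpha L\,w)$; to guarantee $\abs{\tbfs-\mbf{A}\beta_\mc{K}}^2\ge D_\alpha$ for every $\mc{K}$ (note $D_\alpha<\rho^2\bar\alpha+D\alpha$ with a gap of order $\alpha$), one would need $w=O(1/L)$, and one would still have to handle both the selection of the most favorable subset out of $\binom{L}{\alpha L}$ and the pairwise cross terms $\langle\mbf{A}_{j_\ell},\mbf{A}_{j_{\ell'}}\rangle$ inside $\abs{\mbf{A}\beta_\mc{K}}^2$.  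None of this is addressed in your proposal, and it is exactly the content of the paper's Definition \ref{def:falph} and the chain $\mc{F}_\alpha(\beta)\Rightarrow$ conditional Markov $\Rightarrow$ Lemma \ref{lem:good}.  Without a replacement for that chain, your second-moment computation $\expec[X_{\mathrm g}^2]\le(1+o(1))(\expec X_{\mathrm g})^2$ is unsupported, so the argument as written does not close.  (Your omission of the scalar quantizer and fixing $c^2=\sigma^2-D$ is a harmless simplification; the substantive issue is the goodness predicate.)
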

\emph{Remark}:  Though the theorem  is valid for all $D \in (0, \sigma^2)$, it is most relevant for the case $\frac{D}{\sigma^2} > x^*$, where $x^* \approx 0.203$ is the solution to the equation 
\[ (1-x) + \frac{1}{2} \log x =0.  \]
For $\frac{D}{\sigma^2}  \leq x^*$, \cite[Theorem $1$]{RVGaussianML} already guarantees that  the optimal rate-distortion function can be achieved, with a smaller value of $b$ than that required by the theorem above.

\subsection{Excess-distortion exponent of SPARC} \label{subsec:err_exp}

The excess-distortion exponent at distortion-level $D$ of a sequence of rate $R$ codes $\{\mathcal{C}_n\}_{n=1,2,\ldots}$ is given by
\be r(D,R) = - \limsup_{n \to \infty} \frac{1}{n} \log  P_{e}(\mathcal{C}_n,D), \ee
where $P_{e}(\mathcal{C}_n,D)$ is defined in \eqref{eq:pedef}. The optimal excess-distortion exponent for a rate-distortion pair $(R,D)$ is the supremum of the excess-distortion exponents over all sequences of codes with rate  $R$, at distortion-level $D$.

The optimal excess-distortion exponent for discrete memoryless sources was obtained by Marton \cite{MartonRD74}, and the result was extended to memoryless Gaussian  sources by Ihara and Kubo \cite{IharaKubo00}. 

\begin{fact}\cite{IharaKubo00}
\label{fact:ihara}
For an i.i.d. Gaussian source distributed as $\mathcal{N}(0, \sigma^2)$ and  squared-error distortion criterion, the optimal excess-distortion exponent at rate $R$ and distortion-level $D$ is
\be
r^*(D,R) = \left\{
\begin{array}{ll}
 \frac{1}{2} \left( \frac{a^2}{\sigma^2} - 1 - \log \frac{a^2}{\sigma^2} \right) & \quad R> R^*(D) \\
 0 & \quad R \leq R^*(D)
\end{array}
\right.
\label{eq:opt_exp}
\ee
where  $a^2 = D e^{2R}$.
\end{fact}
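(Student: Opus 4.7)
The plan is to establish this result by the classical two-step pattern of a random-coding achievability bound matched by a volumetric converse, replacing Marton's discrete type-counting with a large-deviations analysis based on the empirical second moment $\abs{\bfs}^2 = \norm{\bfs}^2/n$. For achievability, I would decompose $P_e(\mathcal{C}_n, D)$ by conditioning on the value of $\abs{\bfs}^2$. Cramer's theorem applied to $\sum_i S_i^2$ gives
\[
P\bigl(\abs{\bfs}^2 \approx s^2\bigr) \sim \exp\bigl(-n I(s^2)\bigr), \qquad I(s^2) := \tfrac{1}{2}\bigl(\tfrac{s^2}{\sigma^2} - 1 - \log\tfrac{s^2}{\sigma^2}\bigr),
\]
which is precisely $D(\mc{N}(0,s^2) \, \| \, \mc{N}(0,\sigma^2))$. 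Conditional on $\abs{\bfs}^2 = s^2$, the source is uniformly distributed on the sphere of radius $s\sqrt{n}$, whose rate-distortion function is $R_s^*(D)=\tfrac{1}{2}\log(s^2/D)$ for $s^2>D$. Using a random codebook (either i.i.d.\ Gaussian or, more cleanly, uniform on an appropriately chosen sphere), I would show via a standard spherical-cap counting argument that whenever $R > R_s^*(D)$, the conditional error probability over the random codebook decays \emph{doubly} exponentially in $n$ and contributes negligibly. Hence the dominant contribution to $P_e$ comes from $s^2 \geq a^2 := De^{2R}$, the smallest $s^2$ with $R_s^*(D)\geq R$; since $I$ is strictly convex with minimum $0$ at $\sigma^2$ and $a^2 > \sigma^2$ whenever $R > R^*(D)$, the infimum of $I$ over $[a^2,\infty)$ is attained at $s^2 = a^2$, producing the claimed exponent $r^*(D,R) = I(a^2)$.

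For the converse, the plan is a sphere-covering argument. Any rate-$R$ code has at most $e^{nR}$ codewords, so the set of source sequences reproduced within distortion $D$ lies in a union of $e^{nR}$ Euclidean balls of radius $\sqrt{nD}$. Exploiting the spherical symmetry of the source law on each shell $\abs{\bfs}^2 \approx s^2$, the conditional covering probability is at most $e^{nR}$ times the normalized spherical-cap area, which has exponent $\tfrac{n}{2}\log(D/s^2)$. For any $s^2 > a^2$ this upper bound vanishes exponentially, so
\[
P_e(\mathcal{C}_n, D) \,\geq\, P\bigl(\abs{\bfs}^2 > a^2 + \delta\bigr)(1-o(1)) \,\sim\, e^{-n I(a^2+\delta)},
\]
and letting $\delta \downarrow 0$ yields the matching lower bound on the exponent.

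The main obstacle will be making the conditional achievability analysis uniform in $s^2$: the doubly-exponential conditional bound must hold simultaneously across a sufficiently fine partition of $(0, a^2)$ so that the discretization error in integrating over shells does not leak into the exponent. Once this thin-shell decomposition is in place, the surviving estimates reduce to standard high-dimensional spherical-geometry and Laplace-method computations that parallel Marton's treatment in \cite{MartonRD74}; the only genuinely new ingredient relative to the discrete case is the Gaussian large-deviations rate function $I(\cdot)$, whose explicit form makes the optimization over shells solvable in closed form.
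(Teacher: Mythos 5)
The paper states Fact~\ref{fact:ihara} with a citation to Ihara--Kubo and gives no proof of its own, so there is no internal argument against which to compare your proposal directly. Your sketch is, however, sound, and it follows what is essentially the canonical route for this result: condition on the radial shell $\abs{\bfs}^2 = s^2$, use the Cram\'er rate function $I(s^2) = D\big(\mc{N}(0,s^2)\,\|\,\mc{N}(0,\sigma^2)\big)$ for the shell probabilities, note that the conditional source on each shell is uniform on a sphere with rate-distortion function $\tfrac12\log(s^2/D)$, and then pair a doubly-exponential random-coding achievability bound on shells with $s^2 < a^2$ against a spherical-cap covering converse on shells with $s^2 > a^2$. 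The threshold $a^2 = De^{2R}$ and the convexity of $I$ with minimum at $\sigma^2$ then force the exponent $I(a^2)$ for $R > R^*(D)$ and exponent zero for $R \le R^*(D)$ (since in that regime $a^2 \le \sigma^2$, so the uncovered shells carry $\Omega(1)$ probability). Your two technical warnings — that the doubly-exponential conditional bound must be made uniform over $s^2$ bounded away from $a^2$, and that one should pass to the limit $\delta \downarrow 0$ after working with a fixed-width margin — are exactly the right points to flag and are easy to discharge.

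Two remarks on how this relates to what the paper actually does. First, Theorem~\ref{thm:err_exp} and Corollary~\ref{corr:err_exp} of the paper independently establish the \emph{achievability} half of Fact~\ref{fact:ihara} by an entirely different mechanism: Suen's inequality applied to $\e$-good SPARC solutions gives a super-exponentially small conditional failure probability on each shell, which is much stronger than the plain doubly-exponential bound you would obtain from an unstructured random codebook, but it requires the rather delicate Lemma~\ref{lem:good_strong}. Your unconstrained-random-coding sketch is simpler and suffices for the Fact as stated, whereas the paper needs the SPARC-specific argument precisely to show that a highly structured, polynomially-describable codebook still meets the optimal exponent. Second, the paper's way of handling the uniformity-in-$s^2$ issue you raise is the $n$-level scalar quantizer $Q(\cdot)$ of $\abs{\bfs}^2$ introduced in Section~\ref{subsec:proof_begin}, which fixes the shell to a discrete grid at a rate cost of $\tfrac{\log n}{n}$; this is the same device you gesture at with the phrase ``thin-shell decomposition,'' and it works here as well.
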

For $R> R^*(D)$, the exponent in \eqref{eq:opt_exp} is the  Kullback-Leibler divergence between two zero-mean Gaussians, distributed as $\mc{N}(0,a^2)$ and $\mc{N}(0,\sigma^2)$, respectively.

The next theorem characterizes the excess-distortion exponent performance of SPARCs.
\begin{thm}
Let $\bfs$ be drawn from an ergodic source with mean zero and variance $\sigma^2$. Let $D \in (0, \sigma^2)$, $R > \frac{1}{2} \log \frac{\sigma^2}{D}$,  and  
$\gamma^2 \in (\sigma^2, De^{2R})$. Let
\be b> \max \left\{ 2, \  \frac{7}{5} b_{min}\left( {\gamma^2}/{D} \right) \right\}, \label{eq:bmin_exp} \ee 
where $b_{min}(.)$ is defined in \eqref{eq:bmin_def}.
Then there exists a sequence of rate $R$ SPARCs $\{C_n\}_{n=1,2 \ldots}$, where $\mathcal{C}_n$ is defined by an $n  \times L_n M_n$ design matrix with $M_n =L_n^b$ and $L_n$ determined by \eqref{eq:rel_nL}, whose probability of excess distortion at distortion-level $D$ can be bounded as follows for all sufficiently large $n$. 
\be
P_e(\mc{C}_n, D)  \leq P(\abs{\bfs}^2 \geq \gamma^2) + \exp\left( - \kappa n^{1+c} \right),
\ee
where  $\kappa, c$ are strictly positive universal constants.
\label{thm:err_exp}
\end{thm}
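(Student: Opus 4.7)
The idea is to peel off the randomness of $\bfs$ and then apply Suen's correlation inequality to a suitably-chosen count of \emph{good} SPARC solutions. Conditioning on $\bfs$ gives
\[ P_e(\mc{C}_n, D) \;\leq\; P(\abs{\bfs}^2 > \gamma^2) \;+\; \sup_{\bfs\,:\,\abs{\bfs}^2 \leq \gamma^2} P\!\left( \min_{\beta \in \mcb} \abs{\bfs - \mbf{A}\beta}^2 > D \,\big|\, \bfs \right), \]
and the first term is exactly the tail contribution in the statement. The task is therefore to bound the conditional no-solution probability, uniformly over typical $\bfs$, by $\exp(-\kappa n^{1+c})$. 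The second moment method alone yields only $P(X=0) \to 0$ and not a super-exponential rate, so a concentration tool is required; Suen's inequality is natural because the indicators $I_\beta = \mathbf{1}\{\abs{\bfs - \mbf{A}\beta}^2 \leq D\}$ have a clean dependency structure, namely $I_\beta$ and $I_{\beta'}$ are independent whenever the supports of $\beta,\beta'$ use disjoint columns of $\mbf{A}$.

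Fix a typical $\bfs$ and, mirroring the proof of Theorem \ref{thm:ml_result}, restrict attention to the subset $\mcb^\star \subset \mcb$ of good solutions --- those $\beta$ whose conditional count $X(\beta) = \expec[X \mid I_\beta = 1]$ is close to $\expec X$ --- and let $X_g = \sum_{\beta \in \mcb^\star} I_\beta$. The refined second moment machinery of Theorem \ref{thm:ml_result}, applied uniformly in $\rho = \abs{\bfs}$ over $\rho^2 \leq \gamma^2$, yields $\expec X_g = (1-o(1))\expec X$ and $\expec X_g^2 \leq (1+o(1))(\expec X_g)^2$ as soon as $b > b_{\min}(\gamma^2/D)$; the factor $7/5$ in the hypothesis on $b$ supplies the slack needed to make these estimates uniform in $\rho$. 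Suen's correlation inequality \cite{JansonBook} then gives
\[ P(X_g = 0) \;\leq\; \exp\!\left( -\min\!\left\{ \frac{(\expec X_g)^2}{8 \Delta_g},\ \frac{\expec X_g}{6 \delta_g},\ \frac{\expec X_g}{2}\right\} \right), \]
where $\delta_g = \max_{\beta \in \mcb^\star} \sum_{\beta' \sim \beta} \expec I_{\beta'}$ and $\Delta_g = \sum_{\beta \sim \beta'} \expec[I_\beta I_{\beta'}]$ are summed over ordered pairs of good solutions whose supports share at least one column. The key combinatorial observation is that each $\beta$ has at most $M^L - (M-1)^L \leq L M^{L-1}$ neighbors in the dependency graph, so
\[ \frac{\expec X_g}{\delta_g} \;\geq\; \kappa_1 \frac{M}{L} \;=\; \kappa_1 L^{b-1} \;\geq\; \kappa_2 \left(\frac{n}{\log n}\right)^{b-1}, \]
which exceeds $n^{1+c}$ for some $c>0$ whenever $b > 2$. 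The refined second moment estimate on $\mcb^\star$ makes $(\expec X_g)^2/\Delta_g$ grow at least as fast, and $\expec X_g$ itself is exponentially large, so all three terms in Suen's bound are $\geq \kappa n^{1+c}$, giving the required uniform super-exponential decay.

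\textbf{Main obstacle.} The delicate step is promoting the refined second moment estimate of Theorem \ref{thm:ml_result} from a single typical value of $\abs{\bfs}^2$ to a statement that holds uniformly over all $\bfs$ with $\abs{\bfs}^2 \leq \gamma^2$; this is precisely what forces $b_{\min}$ to be evaluated at $\gamma^2/D$ rather than $\sigma^2/D$ and what motivates the multiplicative factor $7/5$. A secondary issue is that the overlap between two $\beta \in \mcb^\star$ can involve anywhere from $1$ to $L$ shared sections, so $\Delta_g$ must be stratified by overlap size and the good-solution property used to prevent the high-overlap strata from dominating the sum. With this uniform control in place, the theorem follows by combining the two terms of the initial decomposition; optimizing $\gamma^2 \nearrow D e^{2R}$ then recovers the optimal Gaussian exponent $r^*(D,R)$ of Fact \ref{fact:ihara}.
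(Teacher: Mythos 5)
Your high-level route is the paper's route: peel off $\bfs$, restrict to $\varepsilon$-good solutions, build the dependency graph on shared columns, and run Suen's inequality on $X_g=\sum V_i$. The decomposition of $P_e$, the dependency graph (via Fact~\ref{fact:depgraph_ex}), the bound $\lambda/\delta \gtrsim L^{b-1}$ from $M=L^b$, and the conclusion that $b>2$ makes this term $\geq \kappa n^{1+c}$ are all correct and match the paper.

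However, you misattribute the origin of the factor $7/5$, and this hides the one genuinely new step in the Suen argument. The $7/5$ does \emph{not} come from making the second-moment estimates uniform in $\rho$. Uniformity over $\rho^2\in(D,\gamma^2)$ is handled cheaply at the very end by the monotonicity of $b_{\min}(\cdot)$, which is exactly why $b_{\min}$ ends up evaluated at $\gamma^2/D$. The $7/5$ instead comes from the third Suen term $\lambda^2/(8\Delta)$. For this to be superlinear in $n$ one needs $\Delta \leq \frac{\varepsilon}{2}(\expec X)^2$ with $\varepsilon$ itself polynomially small, and the paper takes $\varepsilon=L^{-3/2}$ (versus $\varepsilon=L^{-1/2}$ in Theorem~\ref{thm:ml_result}). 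Since $\varepsilon$-goodness is enforced by bounding each $X_\alpha(\beta)$ separately and there are $L$ values of $\alpha$, the per-$\alpha$ threshold must tighten from $L^{-3/2}$ (Lemma~\ref{lem:good}) to $L^{-5/2}$ (Lemma~\ref{lem:good_strong}). Re-running the proof of Lemma~\ref{lem:good} with $\frac{5}{2L}$ in place of $\frac{3}{2L}$ produces $\frac{3.5R}{\Lambda(0)}$ instead of $\frac{2.5R}{\Lambda(0)}=b_{\min}$, i.e., the $7/5$. Your plan treats $\Delta_g$ as a ``secondary issue'' to be handled by stratifying over overlaps, but the stratification is already done; the missing ingredient is precisely the strengthened goodness lemma. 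Relatedly, your description of $\mcb^\star$ as those $\beta$ for which $\expec[X \mid I_\beta=1]$ is close to $\expec X$ is not right: that conditional expectation is deterministic and is in fact $\gg \expec X$ below $R_0(D)$. Goodness is a property of the \emph{realized} design matrix — the event that $\sum_\alpha X_\alpha(\beta)<\varepsilon\,\expec X$ — not of a conditional mean. Finally, the last sentence about sending $\gamma^2\nearrow De^{2R}$ belongs to Corollary~\ref{corr:err_exp}, not to Theorem~\ref{thm:err_exp} itself, which only asserts a bound for a fixed $\gamma^2$.
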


\begin{corr}
Let $\bfs$ be drawn from an i.i.d. Gaussian source with mean zero and variance $\sigma^2$. Fix rate $R > \frac{1}{2} \log \frac{\sigma^2}{D}$, and let $a^2=De^{2R}$. Fix any $\e \in (0, a^2 -\sigma^2)$, and 
\be
b > \max \left\{2, \, \frac{7}{5} b_{min}\left(\frac{a^2- \e}{D} \right) \right \}.
\label{eq:b_a2e}
\ee 
There exists a sequence of rate $R$ SPARCs with  parameter $b$ that achieves the excess-distortion exponent 
\[  \frac{1}{2} \left( \frac{a^2 - \e}{\sigma^2} -1 - \log \frac{a^2-\e}{\sigma^2} \right). \]
Consequently, the supremum of excess-distortion exponents  achievable by SPARCs for i.i.d. Gaussian sources sources is equal to the optimal one, given by \eqref{eq:opt_exp}.
\label{corr:err_exp}
\end{corr}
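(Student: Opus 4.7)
The plan is to specialize Theorem \ref{thm:err_exp} to the i.i.d.\ Gaussian setting by an appropriate choice of $\gamma^2$, then evaluate the resulting tail probability via a standard large deviations estimate. Since $R > R^*(D)$, we have $a^2 = De^{2R} > \sigma^2$, so we may choose $\gamma^2 = a^2 - \epsilon$ for any $\epsilon \in (0, a^2 - \sigma^2)$; this places $\gamma^2$ in the interval $(\sigma^2, De^{2R})$ required by Theorem \ref{thm:err_exp}, and the hypothesis \eqref{eq:b_a2e} on $b$ is precisely \eqref{eq:bmin_exp} for this $\gamma^2$. (If $\epsilon \geq a^2 - \sigma^2$ the claimed exponent is non-positive and the corollary reduces to Theorem \ref{thm:ml_result}.) Theorem \ref{thm:err_exp} then furnishes a rate-$R$ SPARC sequence $\{\mathcal{C}_n\}$ satisfying
\[ P_e(\mathcal{C}_n, D) \leq P(|\mathbf{S}|^2 \geq a^2 - \epsilon) + \exp(-\kappa n^{1+c}) \]
for all sufficiently large $n$.

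Next I would apply Cram\'er's theorem to the sample mean $|\mathbf{S}|^2 = \tfrac{1}{n}\sum_{i=1}^n S_i^2$ of i.i.d.\ squared Gaussians. The cumulant generating function of $S_i^2$ is $\Lambda(\theta) = -\tfrac{1}{2}\log(1 - 2\sigma^2\theta)$ for $\theta < 1/(2\sigma^2)$; a direct Legendre-transform computation gives the upper-tail rate function $I(t) = \tfrac{1}{2}\bigl( t/\sigma^2 - 1 - \log(t/\sigma^2) \bigr)$ for $t \geq \sigma^2$. Since $a^2 - \epsilon > \sigma^2$ lies in the upper-tail regime, Cram\'er yields
\[ \lim_{n \to \infty} -\tfrac{1}{n}\log P\bigl(|\mathbf{S}|^2 \geq a^2 - \epsilon\bigr) = \tfrac{1}{2}\left( \frac{a^2-\epsilon}{\sigma^2} - 1 - \log \frac{a^2-\epsilon}{\sigma^2} \right). \]

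The term $\exp(-\kappa n^{1+c})$ is super-exponentially small and is dominated by the first term, so the error exponent of $\{\mathcal{C}_n\}$ equals the displayed quantity, establishing the first assertion of the corollary. For the second assertion, send $\epsilon \downarrow 0$: because $b_{min}(\cdot)$ is continuous and finite at $x = e^{2R}$ (the denominator of \eqref{eq:bmin_def} remains strictly positive there since $R > 0$), one can pick a single finite $b$ (or a family indexed by $\epsilon$) satisfying \eqref{eq:b_a2e} for all small $\epsilon$, and continuity of $x \mapsto \tfrac{1}{2}(x/\sigma^2 - 1 - \log(x/\sigma^2))$ at $x = a^2$ recovers the Ihara--Kubo exponent $r^*(D,R)$ of \eqref{eq:opt_exp}. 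For $R \leq R^*(D)$ the optimal exponent is zero and is trivially matched by Theorem \ref{thm:ml_result}. The whole argument is essentially bookkeeping around Theorem \ref{thm:err_exp}; the only minor subtlety is ensuring that $b_{min}$ does not blow up as $\gamma^2 \uparrow a^2$, which is immediate from \eqref{eq:bmin_def}, so there is no substantive obstacle.
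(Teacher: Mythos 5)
Your proof takes essentially the same route as the paper: set $\gamma^2 = a^2 - \e$ in Theorem \ref{thm:err_exp}, invoke Cram\'er's theorem for the upper tail of the chi-squared sample mean to get the rate function $\tfrac{1}{2}(t/\sigma^2 - 1 - \log(t/\sigma^2))$, observe that $\exp(-\kappa n^{1+c})$ is asymptotically negligible next to the exponentially decaying tail probability, and then let $\e \downarrow 0$. One small inaccuracy in your parenthetical aside: for $\e \geq a^2 - \sigma^2$ the claimed exponent $\tfrac{1}{2}\bigl(\tfrac{a^2-\e}{\sigma^2} - 1 - \log\tfrac{a^2-\e}{\sigma^2}\bigr)$ is \emph{not} non-positive (the function $u \mapsto \tfrac{1}{2}(u-1-\log u)$ is non-negative on all of $(0,\infty)$); rather, in that range $\gamma^2 = a^2 - \e$ falls below $\sigma^2$ so it is outside the interval required by Theorem \ref{thm:err_exp} and Cram\'er no longer gives exponential decay of the tail probability, which is why the argument is confined to $\e < a^2 - \sigma^2$. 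This remark is peripheral and does not affect the validity of your main line of reasoning.
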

\begin{IEEEproof}
From Theorem \ref{thm:err_exp}, we know that for any $\e \in (0, a^2-\sigma^2)$, there exists a sequence of rate $R$ SPARCs $\{C_n\}$ for which
\be
P_e(\mc{C}_n, D)  \leq P(\abs{\bfs}^2 \geq a^2 - \e) \left(1 + \frac{\exp(-\kappa n^{1+c})}{P(\abs{\bfs}^2 \geq a^2-\e)}\right)
\label{eq:pe_rew}
\ee
for sufficiently large $n$, as long as the  parameter $b$ satisfies \eqref{eq:b_a2e}.
For $\bfs$ that is  i.i.d. $\mc{N}(0, \sigma^2)$, Cram{\'e}r's large deviation theorem \cite{Den2008LD} yields
\be 
\begin{split} 
& \lim_{n \to \infty} - \frac{1}{n} \log P(\abs{\bfs}^2 \geq a^2 - \e ) \\
& =  \frac{1}{2} \left( \frac{a^2- \e}{\sigma^2} -1 - \log \frac{a^2-\e}{\sigma^2} \right)
\end{split}
 \label{eq:gaussian_ld} 
\ee
for $(a^2 - \e) > \sigma^2$. Thus $P(\abs{\bfs}^2 \geq a^2- \e)$ decays exponentially with $n$; in comparison $\exp( - \kappa n^{1+c})$ decays \emph{faster} than exponentially with $n$. Therefore, from \eqref{eq:pe_rew},  the excess-distortion exponent satisfies
\be
\begin{split}
& \liminf_{ n \to \infty} \, \frac{-1}{n} \log P_e(\mc{C}_n, D)  \\
& \geq  \liminf_{ n \to \infty}  \frac{-1}{n}\Big[ \log P(\abs{\bfs}^2 \geq a^2 - \e)   \\
& \qquad \qquad \left. + \log \left(1 + \frac{\exp(- \kappa n^{1+c})}{P(\abs{\bfs}^2 \geq a^2 - \e)} \right) \right] \\
& = \frac{1}{2} \left( \frac{a^2 - \e}{\sigma^2} -1 - \log \frac{a^2-\e}{\sigma^2} \right).
\end{split}
\ee
Since $\e >0$ can be chosen arbitrarily small, the supremum of all achievable excess-distortion exponents is  \[ \frac{1}{2} \left( \frac{a^2}{\sigma^2} - 1 - \log \frac{a^2}{\sigma^2} \right),\] which is optimal from Fact \ref{fact:ihara}.
\end{IEEEproof}

We remark that the function $b_{min}(x)$ is increasing in $x$. Therefore \eqref{eq:b_a2e} implies that larger values of the design parameter $b$ are required to achieve  excess-distortion exponents closer to the optimal value (i.e., smaller values of $\e$ in Corollary \ref{corr:err_exp}).


\section{Inadequacy of the Direct Second MoM} \label{sec:2mom_fail}

\subsection{First steps of the proof} \label{subsec:proof_begin}

Fix a rate $R > R^*(D)$, and $b$ greater than the minimum value specified by the theorem. Note that $D e^{2R} > \sigma^2$ since $R > \tfrac{1}{2} \log \tfrac{\sigma^2}{D}$. Let $\gamma^2$ be any number such that $\sigma^2 < \gamma^2 < D e^{2R}$.

\emph{Code Construction}:
 For each block length $n$, pick $L$ as specified by \eqref{eq:rel_nL} and $M=L^b$.
Construct an $n \times ML$ design matrix $\mathbf{A}$ with entries drawn i.i.d. $\mathcal{N}(0, 1)$. The codebook consists of all vectors $\mathbf{A} \beta$ such that $\beta \in \mcb$. The non-zero entries of $\beta$ are all set equal to a value specified below.

\emph{Encoding and Decoding}: If the source sequence $\bfs$ is such that $\abs{\bfs}^2 \geq \gamma^2$, then the encoder declares an error. If $\abs{\bfs}^2 \leq D$, then $\bfs$ can be trivially compressed to within distortion $D$ using the all-zero codeword. The addition of this extra codeword to the codebook affects the rate in a negligible way.

 If $\abs{\bfs}^2 \in (D, \gamma^2)$, then $\bfs$ is compressed in two steps. First, quantize $\abs{\bfs}^2$ with an $n$-level uniform scalar quantizer $Q(.)$ with  support in the interval $(D, \gamma^2]$. For input $x\in(D, \gamma^2]$, if 
 \[  x \in \left( D+ \frac{(\gamma^2-D) (i-1)}{n},  \ D+ \frac{(\gamma^2-D) i}{n} \right], \]
 for   $i \in \{1, \ldots, n\}$, then  the quantizer output is
\ben
Q(x) = D+  \frac{(\gamma^2-D) (i-\tfrac{1}{2})}{n}. 
\een
Conveying the scalar quantization index to the decoder (with an additional $\log n$ nats)  allows us to adjust the codebook variance according to the norm of the observed source sequence.\footnote{The scalar quantization step is only included to simplify the analysis. In fact, we could use the same codebook variance $(\gamma^2 - D)$ for all $\bfs$ that satisfy 
 $\abs{\bfs}^2 \leq (\gamma^2-D)$, but this would make the forthcoming large deviations analysis quite cumbersome.} The non-zero entries of $\beta$ are each set to $\sqrt{(Q(\abs{\bfs}^2)-D)/L}$ so that each SPARC codeword has variance $(Q(\abs{\bfs}^2)-D)$.
 Define  a ``quantized-norm" version of $\bfs$ as
 \be
 \tilde{\bfs} := \sqrt{\frac{Q(\abs{\mbf{S}}^2)}{\abs{\mbf{S}}^2}} \, \mbf{S}.
 \ee
 Note that $\abs{\tilde{\bfs}}^2= Q(\abs{\mbf{S}}^2)$. We  use the SPARC to compress $\tilde{\bfs}$. The encoder finds 
\[ \hat{\beta} := \underset{\beta \in \mcb}{\operatorname{argmin}} \ \norm{\tilde{\bfs} - \mathbf{A}\beta}^2. \]
The decoder receives $\hat{\beta}$ and reconstructs $\bfsh = \mathbf{A} \hat{\beta}$. Note that  for block length $n$, the total number of bits transmitted by encoder is 
$\log n + L \log M$, yielding an overall rate of $R + \tfrac{\log n}{n}$ nats/sample.

\emph{Error Analysis}: For $\bfs$ such that $\abs{\bfs}^2 \in (D, \gamma^2)$, the overall distortion can be bounded as 
\be
\begin{split}
& \abs{\bfs - \mathbf{A} \hat{\beta}}^2 = \abs{\bfs - \tilde{\bfs} + \tilde{\bfs} - \mathbf{A} \hat{\beta}}^2  \\
&  \leq   \abs{\bfs - \tilde{\bfs}}^2 +  2 \abs{\bfs - \tilde{\bfs}}\abs{\tilde{\bfs} - \mbf{A} \hat{\beta}} +  \abs{\tilde{\bfs} -  \mbf{A} \hat{\beta}} ^2 \\
& \leq \frac{\kappa_1}{n^2} + \frac{\kappa_2 \abs{\tilde{\bfs} - \mbf{A}\hat{\beta}}}{n} +  \abs{\tilde{\bfs} - \mbf{A} \hat{\beta}} ^2
\end{split}
\label{eq:dist_overall}
\ee
for some positive constants $\kappa_1, \kappa_2$. The last inequality holds because the step-size of the scalar quantizer is 
$\frac{(\gamma^2-D)}{n}$, and $\abs{\bfs}^2 \in (D, \gamma^2)$.

 Let $\mc{E}(\tilde{\bfs})$ be the event that the minimum of $\abs{\tilde{\bfs} - \mathbf{A}\beta}^2$ over $\beta \in \mcb$ is greater than $D$. The encoder declares an error if  $\mc{E}(\tilde{\bfs})$ occurs. If $\mc{E}(\tilde{\bfs})$  \emph{does not} occur, 
the overall distortion in \eqref{eq:dist_overall} can be bounded as 
\be
\abs{\bfs - \mathbf{A} \hat{\beta}}^2 \leq D + \frac{\kappa}{n},
\label{eq:good_dist_bnd}
\ee
 for some positive constant $\kappa$. The overall rate (including that of the scalar quantizer) is $R + \frac{\log n}{n}$.
 
Denoting the probability of excess distortion for this random code by $P_{e,n}$, we have
\be
\begin{split}
P_{e,n}  &  \leq P(\abs{\bfs}^2 \geq \gamma^2) +  \max_{\rho^2 \in (D, \gamma^2)} P(\mc{E}(\tilde{\bfs}) \mid \abs{\tilde{\bfs}}^2 = \rho^2).
\end{split}
\label{eq:err_bound}
\ee

As $\gamma^2 > \sigma^2$, the ergodicity of the source guarantees that
\be
\lim_{n \to \infty} P(\abs{\bfs}^2 \geq \gamma^2) = 0.
\label{eq:s_erg}
\ee
To bound the second term in \eqref{eq:err_bound}, without loss of generality we can assume  that the source sequence \[ \tbfs = (\rho, \ldots, \rho). \] This is because the codebook distribution is rotationally invariant, due to the  i.i.d.  $\mc{N}(0,1)$ design matrix 
$\mbf{A}$. For any $\beta$, the entries of $\mbf{A} \beta(i)$ i.i.d. $\mc{N}(0,\rho^2-D)$.
We enumerate the codewords as $\mbf{A} \beta(i)$, where $\beta(i) \in \mcb$ for $i=1,\ldots, e^{nR}$.  

Define the indicator random variables
\be
U_i(\tbfs) = \left\{
\begin{array}{ll}
1 & \text{ if } \abs{\mbf{A} \beta(i) - \tbfs}^2 \leq D,\\
0 & \text{ otherwise}.
\end{array} \right.
\label{eq:ui_def}
\ee
We can then write
\be
P( \mc{E}(\tbfs)  ) = P\left(\sum_{i=1}^{e^{nR}} U_i(\tbfs) =0 \right).
\label{eq:sum_ui}
\ee
For a fixed $\tbfs$,  the $U_i(\tbfs)$'s are dependent. To see this,  consider codewords $\bfsh(i),\bfsh(j)$  corresponding to the  vectors ${\beta}(i), {\beta}(j) \in \mcb$, respectively. Recall that a vector in $\mcb$ is uniquely defined by the position of the non-zero value in each of its $L$ sections.   If ${\beta}(i)$ and ${\beta}(j)$ overlap in $r$ of their non-zero positions, then the column sums forming codewords $\bfsh(i)$ and $\bfsh(j)$ will share $r$ common terms, and consequently $U_i(\tbfs)$ and $U_j(\tbfs)$ will be dependent.

For brevity, we henceforth denote $U_i(\tilde{\bfs})$ by just $U_i$.  Applying the second MoM with 
\[ X :=\sum_{i=1}^{e^{nR}} U_i, \] we have from \eqref{eq:2nd_mom}
\be
P(X > 0) \geq \frac{(\expec X)^2}{\expec [ X^2]} \stackrel{(a)}{=} \frac{\expec X}{\expec [ X | \, U_1 =1]}
\label{eq:2mom_cond_expec}
\ee
where $(a)$ is obtained by expressing $\expec [ X^2]$ as follows.
\be
\begin{split}
& \expec[X^2] = \expec\left[X \sum_{i=1}^{e^{nR}} U_i \right] = \sum_{i=1}^{e^{nR}} \expec[X U_i] \\
&  = \sum_{i=1}^{e^{nR}} P(U_i =1) \expec [X | U_i =1] \\
& = \expec X \cdot  \expec [X | \, U_1 =1].
\end{split}
\label{eq:x2expand}
\ee
The last equality in \eqref{eq:x2expand} holds because  $\expec X = \sum_{i=1}^{e^{nR}} P(U_i =1)$, and due to the symmetry of the code construction. As $\expec[X^2] \geq (\expec X)^2$,  \eqref{eq:2mom_cond_expec} implies that $\expec [ X | \, U_1 =1] \geq \expec X$. Therefore, to show that $X>0$ w.h.p, we need
\be
\frac{\expec [X | \, U_1 =1]}{\expec X} \to 1 \  \text{ as }  \ n \to \infty.
\ee
\subsection{ $\expec X$ versus $\expec [X | \, U_1 =1]$} \label{subsec:ex_v_condex}

To compute $\expec X$, we derive a general lemma specifying the probability that a randomly chosen i.i.d $\mc{N}(0, y)$ codeword  is within distortion $z$ of a source sequence $\bfs$ with $\abs{\bfs}^2=x$. This lemma will be used in other parts of the proof as well. 
\begin{lem}
Let $\bfs$ be a vector with $\abs{\bfs}^2=x$. Let $\bfsh$ be an  i.i.d. $\mc{N}(0, y)$ random vector  that is  independent of $\bfs$. Then for $x,y,z >0$ and sufficiently large $n$, we have
\be 
\frac{\kappa}{\sqrt{n}} e^{-n f(x,y,z)} \leq P \left( \abs{\bfsh - \bfs}^2 \leq  z   \right) \leq  e^{-n f(x,y,z)},
\label{eq:Pfxyz}
\ee
where $\kappa$ is a universal positive constant and for $x,y,z >0$,  the large-deviation rate function $f$ is 
\be
f(x, y, z) = \left\{
\begin{array}{l l}
 \frac{x+z}{2y} - \frac{xz}{A y} - \frac{A}{4y} -\frac{1}{2} \ln\frac{A}{2x} & \text{ if }  z \leq x+y,  \\
 0 & \text{ otherwise}, \\
\end{array}
\right.
\label{eq:fdef}
\ee
and
\be
A = \sqrt{y^2 + 4 x z } - y.
\label{eq:Adef}
\ee
\label{lem:gen_sc}
\end{lem}
\begin{IEEEproof}
We have
\be 
\begin{split}
&P \left( \abs{\bfsh - \bfs}^2 \leq  z  \right) = P\left(\frac{1}{n}\sum_{k=1}^n (\hat{S}_k- S_k)^2 \leq  z  \right) \\
& =  P\left(\frac{1}{n}\sum_{k=1}^n (\hat{S}_k- \sqrt{x} )^2 \leq  z  \right),
\end{split}
\ee
where the last equality is due to the rotational invariance of the distribution of $\bfsh$, i.e.,  $\bfsh$ has the same joint distribution as $\mathbf{O}\bfsh$ for any orthogonal (rotation) matrix $\mathbf{O}$. In particular,  we choose $\mathbf{O}$ to be the matrix that rotates $\bfs$ to the vector $(\sqrt{x}, \ldots, \sqrt{x})$, and note that $ \abs{\bfsh - \bfs}^2 =  \abs{\mbf{O} \bfsh -  \mbf{O}\bfs}^2$. Then, using the strong version of Cram{\'e}r's large deviation theorem due to Bahadur and Rao \cite{Den2008LD, BahadurR60},  we have
\be
 \frac{\kappa}{\sqrt{n}} e^{-nI(x,y,z)}  \leq P\left(\frac{1}{n}\sum_{k=1}^n (\hat{S}_k- x)^2 \leq  z  \right) \leq e^{-nI(x,y,z)},
\ee
where the large-deviation rate function $I$ is given by 
\be
I(x,y,z) = \sup_{\lambda \geq 0 } \left\{ \lambda z  - \log \expec e^{\lambda(\hat{S} -\sqrt{x})^2} \right\}.
\label{eq:Idef}
\ee
The expectation on the RHS of \eqref{eq:Idef} is computed with $\hat{S}  \sim \mc{N}(0,y)$. Using standard calculations, we obtain
\be
 \log \expec e^{\lambda(\hat{S} -\sqrt{x})^2} = \frac{\lambda x}{1-2y\lambda} - \frac{1}{2} \log (1-2y \lambda), \qquad  \lambda <2y.
 \label{eq:logmomgen}
\ee
Substituting the expression in \eqref{eq:logmomgen} in \eqref{eq:Idef} and maximizing over $\lambda \in [0, 2y)$ yields $I(x,y,z) =f(x,y,z)$, where $f$ is given by \eqref{eq:fdef}.
\end{IEEEproof}
The expected number of solutions is given by
\be \expec X = e^{nR} P(U_1 =1) =  e^{nR} P\left( \abs{\mbf{A} \beta(1) - \tbfs}^2 \leq D \right).  \label{eq:EX} \ee
Since $\tbfs=(\rho, \rho, \ldots, \rho)$, and $\mbf{A} \beta(1)$ is i.i.d. $\mc{N}(0, \rho^2-D)$,  applying Lemma \ref{lem:gen_sc} we obtain the bounds
\be \frac{\kappa}{\sqrt{n}} e^{nR} e^{-n f(\rho^2, \rho^2-D,D)}  \leq \expec X   \leq e^{nR} e^{-n f(\rho^2, \rho^2-D,D)},  \label{eq:EXub} \ee
Note that \be f(\rho^2, \rho^2-D, D) = \frac{1}{2} \log \frac{\rho^2}{D}. \label{eq:fnice} \ee 

Next consider $ \expec [X | \, U_1 =1]$. If $\beta(i)$ and $\beta(j)$ overlap in $r$ of their non-zero positions, the column sums forming codewords $\bfsh(i)$ and $\bfsh(j)$ will share $r$ common terms. Therefore,
\be
\begin{split}
&  \expec [X | \, U_1 =1]   = \sum_{i=1}^{e^{nR}} P(U_i=1 | \, U_1 =1) \\
  & = \sum_{i=1}^{e^{nR}} \frac{P(U_i=1, \,  U_1 =1)}{P(U_1=1)}  \\
&  \stackrel{(a)}{=} \sum_{r=0}^{L} {L \choose r} (M-1)^{L-r} \frac{P(U_2=U_1=1 | \, \mc{F}_{12}(r))}{P(U_1=1)}
\end{split}
\label{eq:x|u1a}
\ee
where $\mc{F}_{12}(r)$ is the event that the codewords corresponding to ${U}_1$ and ${U}_2$ share $r$ common terms.  In \eqref{eq:x|u1a}, $(a)$ holds because for each codeword $\bfsh(i)$, there are a total of ${L \choose r} (M-1)^{L-r}$  codewords which share exactly $r$ common terms with $\bfsh(i)$, for $0\leq r \leq L$.  From \eqref{eq:x|u1a} and \eqref{eq:EX}, we obtain
\be
\begin{split}
 & \frac{\expec [X | \, U_1 =1]}{\expec X}  \\
 & = \sum_{r=0}^{L} {L \choose r} (M-1)^{L-r} \frac{P(U_2=U_1=1 | \, \mc{F}_{12}(r))} { e^{nR}\, (P(U_1=1))^2}  \\
 & \stackrel{(a)} \sim  1 +  \sum_{\alpha =  \frac{1}{L}, \ldots, \frac{L}{L}} {L \choose L\alpha} \frac{P(U_2=U_1=1 | \, \mc{F}_{12}(\alpha))}{ M^{L\alpha} \, (P(U_1=1))^2}  \\
& \stackrel{(b)}{=} 1 +  \sum_{\alpha =  \frac{1}{L}, \ldots, \frac{L}{L}} e^{n\Delta_\alpha}
\end{split}
\label{eq:alph_summands}
\ee
where $(a)$ is obtained by substituting $\alpha = \tfrac{r}{L}$ and $e^{nR}=M^L$. The notation $x_L\sim y_L$ means that $x_L/y_L \to 1$  as $L \to \infty$. The equality $(b)$ is from \cite[Appendix A]{RVGaussianML}, where it was also shown that
\be
\Delta_{\alpha} \leq \frac{\kappa}{L}+  \frac{R}{b} \min\{ \alpha, \, \bar{\alpha}, \, \tfrac{\log2}{\log L}\}  - h(\alpha)
\label{eq:delalph_bound}
\ee
where
\be
h(\alpha) := \alpha R - \frac{1}{2} \log \left( \frac{1+ \alpha}{1 - \alpha(1-\frac{2D}{\rho^2})} \right).
\ee
The inequality in \eqref{eq:delalph_bound} is asymptotically tight \cite{RVGaussianML}.
The term $e^{n \Delta_\alpha}$ in \eqref{eq:alph_summands} may be interpreted as follows. Conditioned on $\beta(1)$ being a solution, the expected number of solutions that share $\alpha L$ common terms with $\beta(1)$ is $\sim e^{n \Delta_\alpha} \expec X$. Recall that we require the left side of \eqref{eq:alph_summands} to tend to $1$ as $n \to \infty$. Therefore, we need  $\Delta_\alpha <0$ for $\alpha =\tfrac{1}{L}, \ldots, \tfrac{L}{L}$.  From \eqref{eq:delalph_bound}, we need $h(\alpha)$ to be positive in order to guarantee that $\Delta_\alpha <0$.  However, when $R < (1-\tfrac{D}{\rho^2})$,  it can be verified that  $h(\alpha)<0$  for $\alpha \in (0, \alpha^*)$ where $\alpha^* \in (0,1)$ is the solution to $h(\alpha)=0$.  Thus $\Delta_\alpha$ is \emph{positive} for $\alpha \in (0, \alpha^*)$ when $\frac{1}{2} \log \frac{\rho^2}{D} \leq R \leq (1-\tfrac{D}{\rho^2})$. Consequently,  \eqref{eq:alph_summands} implies that
\be
\frac{\expec [X | \, U_1 =1]}{\expec X} \sim  \sum_{\alpha} e^{n\Delta_\alpha} \to \infty \ \text{ as } \ n \to \infty, 
\label{eq:x|u1b}
\ee
and the second MoM fails.

\subsection{A Stylized Example} \label{subsec:toy_example}
Before describing how to rectify the second MoM failure in the SPARC setting, we present a simple example to give intuition about  the failure modes of the second MoM. The proofs  in the next two sections do not rely on the discussion here.

Consider a sequence of generic random structures (e.g., a sequence of random graphs or SPARC design matrices) denoted  by ${R}_n, \, n \geq 1$. Suppose that for each $n$, the realization of $R_n$ belongs to one of two categories: a category $\mathsf{C}_1$ structure  which  has which has $e^{n}$ solutions, or a category $\mathsf{C}_2$ structure which has $e^{2n}$ solutions. In the case of SPARC, a solution is a codeword that is within the target distortion. Let the probabilities of $R_n$ being of each category be
\be
P(R_n \in \mathsf{C}_1) =  1- e^{-np}, \quad P(R_n \in \mathsf{C}_2) = e^{-np},
\label{eq:prn}
\ee
where $p >0$ is a constant. Regardless of the realization, we note that $R_n$ always has at least $e^n$ solutions.  

We now examine whether the second MoM can guarantee the existence of a solution for this problem as $n \to \infty$. The number of solutions $X$ can be expressed as a sum of indicator random variables: 
\[ X =\sum_{i=1}^{N} U_i,
\]
where $U_i=1$ if configuration $i$ is a solution, and $N$ is the total number of configurations. (In the SPARC context, a configuration is a codeword.) We assume that the configurations are symmetric (as in the SPARC set-up), so that  each one has equal probability of being a solution, i.e., 
\be P(U_i =1 \mid R_n \in \mathsf{C}_1) = \frac{e^n}{N}, \ \   P(U_i =1 \mid R_n \in \mathsf{C}_2) = \frac{e^{2n}}{N}.  \label{eq: u_given_r}\ee

Due to symmetry, the second moment ratio can be expressed as 
\be
\frac{\expec X^2}{(\expec X)^2} = \frac{\expec[X \mid U_1=1]}{\expec X}  =  \frac{\expec[X \mid U_1=1]}{(1-e^{-np})e^n + e^{-np}e^{2n}}.
\label{eq:2mom_rat}
\ee
The conditional expectation in the numerator can be computed as follows.
\be
\begin{split}
& \expec[X | U_1=1]   =  P(R_n \in \mathsf{C}_1 \mid U_1=1) \expec[X | U_1=1, \mathsf{C}_1]  \\
& \qquad  \qquad +  P(R_n \in \mathsf{C}_2 \mid U_1=1) \expec[X | U_1=1, \mathsf{C}_2]  \\
& \stackrel{(a)}{=} \frac{(1-e^{-np}) (e^n/N)}{(1-e^{-np}) (e^n/N) + e^{-np} (e^{2n}/N) } \, e^{n} \\
&  +   \frac{e^{-np} (e^{2n}/N)}{(1-e^{-np}) (e^n/N) + e^{-np} (e^{2n}/N) } \, e^{2n} \\
& =  \frac{(1-e^{-np}) e^{2n} + e^{n(4-p)}}{(1-e^{-np}) e^n + e^{n(2-p)}},
\end{split}
\label{eq:cond_expec_calc0}
\ee
where $(a)$ is obtained by using Bayes' rule to compute $P(R_n \in \mathsf{C}_1 \mid U_1=1)$. The second MoM ratio in \eqref{eq:2mom_rat} therefore equals
\be
\frac{\expec X^2}{(\expec X)^2} = \frac{\expec[X \mid U_1=1]}{\expec X} =  \frac{(1-e^{-np}) e^{2n} + e^{n(4-p)}}{[(1-e^{-np}) e^n + e^{n(2-p)}]^2}.
\label{eq:key_2mom_ratio}
\ee
We examine the behavior of the ratio above as $n \to \infty$ for different values of $p$.

\textbf{Case $1$:} $p \geq 2$. The dominant term in both the numerator and the denominator of \eqref{eq:key_2mom_ratio} is 
$e^{2n}$, and we get
\be
\frac{\expec[X \mid U_1=1]}{\expec X} \to 1 \text{ as } n \to \infty,
\ee
and the second MoM succeeds.

\textbf{Case $2$:}  $1 < p  \leq 2$.  The dominant term in the numerator is $e^{n(4-p)}$, while the dominant term in the denominator is $e^{2n}$. Hence
\be \frac{\expec[X \mid U_1=1]}{\expec X} =\frac{e^{n(4-p)}}{e^{2n}}(1+o(1)) \sim e^{n(2-p)}  \stackrel{n \to \infty}{\longrightarrow} \infty.  \ee

\textbf{Case $3$:} $0< p \leq 1$.  The dominant term in the numerator is $e^{n(4-p)}$, while the dominant term in the denominator is $e^{n(4-2p)}$. Hence
\be \frac{\expec[X \mid U_1=1]}{\expec X}  =\frac{e^{n(4-p)}}{e^{n(4-2p)}}(1+o(1)) \sim  e^{np}  \stackrel{n \to \infty}{\longrightarrow} \infty.  \ee

Thus in both Case $2$ and Case $3$, the second MoM fails because the expected number of solutions conditioned on a solution $(U_1=1)$ is exponentially larger than the unconditional expected value. However, there is an important distinction between the two cases, which allows us to fix the failure of the second MoM in Case $2$ but not in Case $3$. 

Consider the conditional distribution of the number of solutions given $U_1=1$. From the calculation in \eqref{eq:cond_expec_calc0}, we have
\be
\begin{split}
& P(X=e^n \mid U_1=1) = P(R_n \in \mathsf{C}_1 \mid U_1=1) \\
& \quad = \frac{(1-e^{-np}) e^n}{(1-e^{-np}) e^n + e^{n(2-p)} }, \\
& P(X=e^{2n} \mid U_1=1) = P(R_n \in \mathsf{C}_2 \mid U_1=1)  \\
& \quad =  \frac{e^{n(2-p)}}{(1-e^{-np}) e^n + e^{n(2-p)} }.
\end{split}
\label{eq:cond_dist_formula}
\ee

 When $1 < p \leq 2$, the first term in the denominator of the RHS dominates, and the conditional distribution of $X$ is 
 \be
 \begin{split}
&  P(X=e^n \mid U_1=1)  = 1- e^{-n(p-1)}(1+o(1)), \\
 &  P(X=e^{2n} \mid U_1)  =  e^{-n(p-1)}(1+o(1)).
 \end{split}
 \ee
Thus the  conditional probability of a realization $R_n$ being category  $\mathsf{C}_1$ given $U_1=1$ is slightly smaller than the unconditional probability, which is $1-e^{-np}$. However, conditioned on $U_1=1$, a  realization $R_n$ is still extremely likely to have come from category $\mathsf{C}_1$, i.e., have $e^{n}$ solutions. Therefore, when $1 < p \leq 2$, conditioning on a solution \emph{does not} change the nature of the `typical' or `high-probability' realization. This makes it possible to fix the failure of the second MoM in this case. The idea is to define a new random variable $X'$ which counts the number of solutions coming from typical realizations, i.e.,  only category $\mathsf{C}_1$ structures. The second MoM is then applied to $X'$ to show that is strictly positive with high probability.

When $p <1$, conditioning on a solution completely changes the distribution of $X$.  The dominant term in the denominator of the RHS in \eqref{eq:cond_dist_formula} is $e^{n(2-p)}$, so the conditional distribution of $X$ is
 \be
 \begin{split}
 & P(X=e^n \mid U_1=1)  =  e^{-n(1-p)}(1+o(1)),   \\
 &  P(X=e^{2n} \mid U_1)  = 1- e^{-n(1-p)}(1+o(1)).
 \end{split}
 \ee
Thus, conditioned on a solution, a typical realization of $R_n$ belongs to category $\mathsf{C}_2$, i.e., has $e^{2n}$ solutions. On the other hand, if we draw from the {unconditional} distribution of $R_n$ in \eqref{eq:prn}, a typical realization has $e^n$ solutions. In this case, the second moment method cannot be fixed by counting only the solutions from realizations of category $\mathsf{C}_1$, because the total conditional probability of such realizations is very small. This is the analog of the ``condensation phase" that is found in problems such as random hypergraph coloring \cite{CojaZdeb12}. In this phase, although solutions may exist, even an enhanced second MoM does not prove their existence.  

Fortunately, there is no condensation phase in the SPARC compression problem. Despite the failure of the direct second MoM, we prove (Lemma \ref{lem:good}) that  conditioning on a solution does not significantly alter the total number of solutions for a very large fraction of design matrices. Analogous to Case $2$ above, we can apply the second MoM to a new random variable that counts only the solutions coming from typical realizations of the design matrix. This yields the desired result that solutions exist for all rates $R< R^*(D)$.

\section{Proofs of Main Results}  \label{sec:proof}

\subsection{Proof of Theorem \ref{thm:ml_result}} \label{subsec:thm1_proof}
The code parameters, encoding and decoding are as described in Section \ref{subsec:proof_begin}. We build on the proof set-up of Section \ref{subsec:ex_v_condex}.
Given that $\beta \in \mcb$ is a solution, for $\alpha =0, \tfrac{1}{L}, \ldots, \tfrac{L}{L}$ define  $X_{\alpha}(\beta)$ to  be the number of solutions that share $\alpha L$ non-zero terms with $\beta$. The \emph{total} number of solutions given that $\beta$ is a solution is
\begin{align}
X(\beta) & = \sum_{\alpha=0, \frac{1}{L}, \ldots, \frac{L}{L}} X_{\alpha}(\beta)
\end{align}
Using this notation,  we have  
\be
\begin{split}
& \frac{\expec [X | \, U_1 =1]}{\expec X} \stackrel{(a)}{=} \frac{\expec[X(\beta)]}{\expec X} \\
&  = \sum_{\alpha = 0, \frac{1}{L}, \ldots, \frac{L}{L}} \frac{\expec[X_{\alpha}(\beta)]}{\expec X} \ \stackrel{(b)}{\sim} \  1+  \sum_{\alpha = \frac{1}{L}, \ldots, \frac{L}{L}}  e^{n\Delta_\alpha},
\end{split}
\label{eq:exbeta}
\ee
where ($a$) holds because  the symmetry of the code construction allows us to condition on a generic $\beta \in \mcb$ being a solution; ($b$) follows from \eqref{eq:alph_summands}.  Note that $\expec[X_{\alpha}(\beta)]$ and $\expec[X(\beta)]$ are expectations   evaluated with the \emph{conditional} distribution over the space of design matrices given that $\beta$ is a solution.

The key ingredient in the proof is the following lemma, which shows that $X_{\alpha}(\beta)$ is much smaller than $\expec X$ w.h.p  $\forall \alpha \in \{ \frac{1}{L}, \ldots, \frac{L}{L}\}$. In particular, $X_\alpha(\beta) \ll \expec X$ \emph{even} for $\alpha$ for which
\[ \frac{\expec[X_{\alpha}(\beta)]}{\expec X}  \sim e^{n\Delta_\alpha} \to  \infty \ \text{ as } n \to \infty.   \]

\begin{lem}
Let $R> \tfrac{1}{2}\log \frac{\rho^2}{D}$. If $\beta \in \mc{B}_{M,L}$ is a solution, then for sufficiently large $L$ 
\be 
P\left( X_\alpha(\beta)  \leq   L^{-3/2} \, \expec X, \   \text{ for } \tfrac{1}{L}\leq \alpha \leq  \tfrac{L-1}{L} \right) \geq 1- \eta
\ee
where
\be \eta = L^{-2.5 \left(\frac{b}{b_{min}(\rho^2/D)} - 1 \right)}.  \label{eq:eta_def} \ee
The function $b_{min}(.)$ is defined in \eqref{eq:bmin_def}.
\label{lem:good}
\end{lem}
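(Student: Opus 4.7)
The strategy is to condition on the random codeword $\mathbf{V}:=\mathbf{A}\beta$ and exploit two facts: (i) conditional on $U_\beta=1$, $\mathbf{V}$ concentrates near a deterministic ``tilted'' vector via Gibbs conditioning, and (ii) for any overlap level $\alpha$ and any typical $\mathbf{v}$, the conditional expectation $\expec[X_\alpha(\beta)\mid\mathbf{V}=\mathbf{v}]$ is much smaller than $\expec X$. Markov's inequality applied at the level of the conditional expectation then controls $X_\alpha(\beta)$.

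Unconditionally $\mathbf{V}\sim\mathcal{N}(0,(\rho^2-D)I_n)$, while conditional on $U_\beta=1$ a standard large-deviation tilting argument (together with the Bahadur--Rao refinement used in \eqref{eq:EX}) shows that $\mathbf{V}$ concentrates near the minimum-norm point of the distortion ball, $\mathbf{v}^\star:=(1-\sqrt{D}/\rho)\tbfs$. I define the typical set $\mathcal{T}$ of vectors $\mathbf{v}$ for which both $\tbfs^{T}\mathbf{v}/n$ and $|\tbfs-\mathbf{v}|^2$ lie within $L^{-q}$ of their tilted-mean values $\rho^2-\rho\sqrt{D}$ and $D$ respectively, for a suitable $q>0$; then $P(\mathbf{V}\notin\mathcal{T}\mid U_\beta=1)\le\eta/3$.

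For $\mathbf{v}\in\mathcal{T}$, I bound $\expec[X_\alpha(\beta)\mid \mathbf{V}=\mathbf{v}]$ as follows. For each $\beta'\in\mcb$ that shares exactly $\alpha L$ non-zero positions with $\beta$, joint Gaussianity of the columns of $\mathbf{A}$ gives $\mathbf{A}\beta'\mid\mathbf{V}=\mathbf{v}\sim\mathcal{N}(\alpha \mathbf{v},\,(1-\alpha^2)(\rho^2-D)I_n)$, and a Cram\'er large-deviation estimate yields
\[
\expec[X_\alpha(\beta)\mid\mathbf{V}=\mathbf{v}] \leq \binom{L}{\alpha L}(M-1)^{(1-\alpha)L}\,\exp\bigl(-n\,f(|\tbfs-\alpha\mathbf{v}|^2,\,(1-\alpha^2)(\rho^2-D),\,D)\bigr),
\]
with $f$ the rate function of \eqref{eq:fdef}. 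On $\mathcal{T}$, $|\tbfs-\alpha\mathbf{v}|^2=((1-\alpha)\rho+\alpha\sqrt{D})^2+o(1)$. Using $L\log M=nR$ and $\log\binom{L}{\alpha L}=O(L)=o(n)$, the resulting ratio to $\expec X$ takes the form $L^{O(1)}\exp(n\Phi(\alpha))$, where the deterministic function $\Phi$ depends only on $(\alpha,R,\rho^2,D)$. A direct computation shows that $\Phi(0)=0$, while $\Phi(\alpha)<0$ strictly on $(0,1]$ whenever $R>\tfrac12\log(\rho^2/D)$; quantifying this negativity and optimizing over the worst-case $\alpha$ produces the explicit expression $b_{min}(\rho^2/D)$ in \eqref{eq:bmin_def}. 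For $b>b_{min}$ one obtains $\sup_{\mathbf{v}\in\mathcal{T}}\expec[X_\alpha(\beta)\mid\mathbf{V}=\mathbf{v}]\le L^{-2.5\,b/b_{min}}\,\expec X$ for every $\alpha\in\{1/L,\ldots,(L-1)/L\}$.

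Markov's inequality then gives $P(X_\alpha(\beta)>L^{-3/2}\expec X\mid \mathbf{V}=\mathbf{v},U_\beta=1)\le L^{3/2-2.5\,b/b_{min}}$ for each $\alpha$; a union bound over the $L-1$ values of $\alpha$, combined with the typical-set estimate, yields the claimed bound $\eta=L^{-2.5(b/b_{min}-1)}$. The principal technical obstacle is establishing the uniform strict negativity of $\Phi(\alpha)$ on $(0,1]$ and extracting the explicit closed form for $b_{min}$: care is needed near $\alpha=0$ (where $\Phi(0)=0$, so the negativity must come from a careful Taylor expansion around the origin) and near $\alpha=1$ (where the conditional variance $(1-\alpha^2)(\rho^2-D)\to 0$ and the rate function $f$ degenerates, so one falls back on the crude combinatorial bound $\binom{L}{\alpha L}(M-1)^{(1-\alpha)L}\le M^{(1-\alpha)L}$).
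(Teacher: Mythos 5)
Your strategy is a genuinely different and, I believe, valid route: instead of conditioning on the \emph{partial-sum} events $\mc{F}_\alpha(\beta)$ (the paper defines thresholds $D_\alpha$ for the residuals $\abs{\tbfs-\mbf{A}\beta_{\mc{K}}}^2$ and does a union bound over all size-$\alpha L$ subsets $\mc{K}$), you condition on the \emph{entire realized codeword} $\mathbf{V}=\mathbf{A}\beta$. Your identification of the Gibbs/tilted mean $\mathbf{v}^\star=(1-\sqrt{D}/\rho)\tbfs$ is correct, and your conditional-distribution calculation $\mathbf{A}\beta'\mid\mathbf{V}=\mathbf{v}\sim\mathcal{N}(\alpha\mathbf{v},(1-\alpha^2)(\rho^2-D)I_n)$ checks out (the overlap contributes conditional variance $\alpha(1-\alpha)(\rho^2-D)$ and the fresh sections contribute $\bar\alpha(\rho^2-D)$). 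The remaining negativity of the exponent on $(0,1)$ does hold: with $u^2=((1-\alpha)\rho+\alpha\sqrt{D})^2$ and $v^2=(1-\alpha^2)(\rho^2-D)$, one has $f(u^2,v^2,D)\geq \tfrac{1}{2}\log(u^2/D)=\log\bigl(1-\alpha(1-\sqrt{D}/\rho)\bigr)^{-1}$ and, for $R>\tfrac{1}{2}\log(\rho^2/D)$, the concave function $\alpha\mapsto\log(1-\alpha(1-\sqrt{D}/\rho))+\alpha R$ is positive on $(0,1]$; so $\Phi(\alpha)<0$. Your approach is therefore arguably cleaner: instead of pushing through the implicit $D_\alpha$ and Lemma \ref{lem:gap_lem}, it produces an explicit tilted codeword and a direct Cram\'er estimate.

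The concrete issue is your claim that ``optimizing over the worst-case $\alpha$ produces the explicit expression $b_{min}(\rho^2/D)$ in \eqref{eq:bmin_def}.'' It does not. Expanding your exponent near $\alpha=0$, using $\partial f/\partial x\big|_{(\rho^2,\rho^2-D,D)}=1/(2\rho^2)$ and $\partial f/\partial y=0$ at the minimum, gives $\Phi'(0)=-\bigl(R-1+\sqrt{D}/\rho\bigr)$. That linear coefficient plays the role of $\Lambda(0)$ in the paper's proof, so the minimum $b$ your argument actually requires is on the order of $2.5R/(R-1+\sqrt{D}/\rho)$. The paper's $\Lambda(0)$ in \eqref{eq:Lambda01} is a genuinely different expression (a \emph{second-order} Taylor quantity from the $D_\alpha$-based worst-case bound) and is typically much smaller; e.g.\ for $\rho^2=1$, $D=1/4$, $R=0.7$ one finds $\Lambda(0)\approx 9\times10^{-5}$ while $R-1+\sqrt{D}/\rho=0.2$, so $b_{min}$ from \eqref{eq:bmin_def} is about $2\times10^4$ while your route would require $b$ of order $10$. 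Because your constant is \emph{larger} (your $b_{min}$ is \emph{smaller}), the conclusion as stated is still implied: if $b>b_{min}(\rho^2/D)$ in the sense of \eqref{eq:bmin_def}, then $b$ certainly exceeds your threshold and the lemma follows a fortiori, with an even smaller $\eta$. But to splice your argument into the paper you must explicitly observe this, rather than asserting it reproduces \eqref{eq:bmin_def}. Secondly, the final accounting needs care: your per-$\alpha$ Markov bound $L^{3/2-2.5\,b/b_{min}}$, multiplied by the $L-1$ choices of $\alpha$ and with the typical-set term added, comes out at about $L^{5/2-2.5\,b/b_{min}}$, which matches $\eta=L^{2.5-2.5\,b/b_{min}}$ only up to a constant; you should either absorb this into a slightly smaller exponent or note, as the paper implicitly does, that the bound for $\alpha=1/L$ dominates the geometric-type sum over larger $\alpha$. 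Finally, the Gibbs-conditioning concentration of $\mathbf{V}$ to the window $L^{-q}$ and the uniformity of the exponent over $\mathbf{v}\in\mathcal{T}$ are stated but not carried out; they do work (any $q\in(0,1/2)$ gives super-polynomial failure probability, and the $O(\alpha L^{-q})$ error in $\abs{\tbfs-\alpha\mathbf{v}}^2$ is lower order than the $O(\alpha)$ exponent), but these steps require the same level of large-deviations bookkeeping as the paper's $\mc{F}_\alpha$-based argument.
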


\proof The proof of the lemma is given in Section \ref{sec:proof_lem}.

The probability measure in  Lemma \ref{lem:good} is the conditional distribution on the space of design matrices $\mbf{A}$ given that $\beta$ is a solution.

\begin{defi}
For $\e >0$,  call a solution $\beta$ ``$\e$-good"  if
\be \sum_{\alpha = \frac{1}{L}, \ldots, \frac{L}{L}}X_\alpha (\beta) < \e \, {\expec X}. \label{eq:good_def} \ee
\label{def:good}
\end{defi}
Since we have fixed $\tbfs= (\rho, \ldots, \rho)$, whether a solution $\beta$ is $\e$-good or not is determined by the design matrix. Lemma \ref{lem:good} guarantees that w.h.p any solution $\beta$ will be $\e$-good, i.e., if $\beta$ is a solution, w.h.p the design matrix is such that  the number of solutions sharing any common terms with $\beta$ is less $\e \expec[X]$.

The key to proving Theorem \ref{thm:ml_result} is  to apply the second MoM only to $\e$-good solutions.
 Fix $\epsilon = L^{-0.5}$. For $i=1,\ldots, e^{nR}$, define the indicator random variables
 \be
V_i  = \left\{
\begin{array}{ll}
1 & \text{ if } \abs{\mbf{A} \beta(i) - \tbfs}^2 \leq D \ and  \  \beta(i) \text{ is $\e$-good} ,\\
0 & \text{ otherwise}.
\end{array} \right.
\label{eq:vi_def}
\ee
The number of $\e$-good solutions, denoted by $X_g$, is given by
\be
X_g = V_1 + V_2 + \ldots + V_{e^{nR}}.
\label{eq:xg_count}
\ee
We will apply the second MoM to $X_g$ to show that $P(X_g > 0) \to 1$ as $n \to \infty$.   We have
\be
P(X_g > 0) \geq \frac{(\expec X_g)^2}{\expec [ X_g^2]} \, = \, \frac{\expec X_g}{\expec [ X_g | \, V_1 =1]}
\label{eq:2mom_v}
\ee
where the second equality is obtained by writing  $\expec [ X_g^2] = (\expec X_g) \expec [ X_g | \, V_1 =1]$, similar to \eqref{eq:x2expand}.

\begin{lem}
a) $\expec X_g \geq (1- \eta) \expec X$, where $\eta$ is defined in \eqref{eq:eta_def}.

b) $\expec [ X_g | \, V_1 =1] \leq (1 + L^{-0.5}) \expec X$.
\label{lem:Xg}
\end{lem}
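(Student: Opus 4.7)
For part (a), I would invoke the symmetry of the random SPARC construction to write $\expec X_g = e^{nR} P(V_1 = 1) = e^{nR} P(U_1 = 1)\, P(\beta(1) \text{ is } \e\text{-good} \mid U_1 = 1)$, which reduces everything to lower bounding $P(\beta(1) \text{ is } \e\text{-good} \mid U_1 = 1)$. Lemma \ref{lem:good} provides the bound $X_\alpha(\beta(1)) \leq L^{-3/2}\expec X$ simultaneously for all $\alpha \in \{1/L,\ldots,(L-1)/L\}$ with probability at least $1-\eta$ conditionally on $U_1 = 1$. Summing these $L-1$ bounds and adding the trivial contribution $X_1(\beta(1)) = 1$ from $\beta(1)$ itself yields $\sum_{\alpha \geq 1/L} X_\alpha(\beta(1)) \leq (L-1)L^{-3/2}\expec X + 1$, which is strictly less than $L^{-1/2}\expec X$ for $n$ large enough because $\expec X$ grows exponentially in $n$ and so the additive constant is absorbed by the margin $L^{-3/2}\expec X$. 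Thus the event in Lemma \ref{lem:good} implies the $\e$-good condition with $\e = L^{-1/2}$, and multiplying back gives $\expec X_g \geq (1-\eta)\expec X$.

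For part (b), since $V_i \leq U_i$ pointwise I have $X_g \leq X$, so it suffices to upper bound $\expec[X \mid V_1 = 1]$. I would decompose
\[
X \;=\; X_0(\beta(1)) \;+\; \sum_{\alpha \geq 1/L} X_\alpha(\beta(1)),
\]
where $X_0(\beta(1))$ counts solutions whose support (set of chosen columns) is disjoint from that of $\beta(1)$. The $\e$-good condition, which is forced by $V_1 = 1$, bounds the second piece deterministically by $L^{-1/2}\expec X$. The main work is on the first term, where the key structural observation is that $X_0(\beta(1))$ depends only on the $(M-1)L$ columns of $\mathbf{A}$ outside the support of $\beta(1)$, whereas $U_1$ depends only on the $L$ columns inside that support; by independence of disjoint column blocks of the i.i.d.\ Gaussian design matrix, $X_0(\beta(1))$ and $U_1$ are independent. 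Hence
\[
\expec[X_0(\beta(1))\, \mathbf{1}_{V_1 = 1}] \;\leq\; \expec[X_0(\beta(1))\, \mathbf{1}_{U_1 = 1}] \;=\; \expec[X_0(\beta(1))]\, P(U_1 = 1),
\]
and dividing by $P(V_1 = 1) \geq P(U_1 = 1)(1-\eta)$ together with the direct count $\expec[X_0(\beta(1))] = (M-1)^L P(U=1) = (1-1/M)^L \expec X$ gives $\expec[X_0(\beta(1)) \mid V_1 = 1] \leq (1-1/M)^L \expec X /(1-\eta)$. Adding the two contributions yields $\expec[X_g \mid V_1 = 1] \leq \bigl[(1-1/M)^L/(1-\eta) + L^{-1/2}\bigr] \expec X$.

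The only genuinely structural idea in the proof is the independence of $X_0(\beta(1))$ and $U_1$ via the disjoint-column decomposition of $\mathbf{A}$; everything else is quantitative bookkeeping. The hard part will be verifying that the prefactor $(1-1/M)^L/(1-\eta)$ is at most $1$ in the parameter regime of Theorem \ref{thm:ml_result}, so that the right-hand side of the bound above fits inside $(1 + L^{-1/2})\expec X$. Since $(1-1/M)^L = 1 - O(L^{1-b})$ for $b > 1$ and $\eta = L^{-2.5(b/b_{min}-1)}$ by \eqref{eq:eta_def}, this amounts to checking $\eta \lesssim L^{1-b}$, which is the quantitative content of the hypothesis $b > b_{min}$ combined with the specific form of $b_{min}$ in \eqref{eq:bmin_def}. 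Part (a) similarly relies on $\expec X$ being exponentially large (a consequence of the large-deviations lower bound \eqref{eq:EX}) so that the additive constant from $X_1(\beta(1)) = 1$ is absorbed into the $\e$-good threshold.
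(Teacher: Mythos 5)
Part (a) of your proposal is essentially the paper's argument, and your extra observation (that $X_1(\beta(1))=1$ is absorbed because $\expec X$ grows exponentially while $L^{-3/2}$ decays polynomially) is a correct and welcome filling-in of a step the paper leaves implicit. For part (b), your decomposition into $X_0$ and $\sum_{\alpha\geq 1/L} X_\alpha$ matches the paper, and you are right to be uncomfortable with a bare claim that $X_0(\beta(1))$ is independent of $V_1$: the event $V_1=1$ encodes the $\e$-good condition, which is a function of the overlap counts $X_\alpha(\beta(1))$, and those in turn depend on columns outside the support of $\beta(1)$, i.e.\ on the very columns that determine $X_0$. So the two are not independent, and your replacement argument via $\mathbf{1}_{V_1=1}\leq \mathbf{1}_{U_1=1}$ together with the genuine independence of $X_0$ and $U_1$ is sound. (The paper's proof instead asserts $\expec[X_0\mid V_1=1]=\expec[X_0]$ directly, with exactly the independence claim you are skeptical of.)

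The gap is in your last step. You reduce part (b) to checking $(1-1/M)^L/(1-\eta)\leq 1$ and claim this is ``the quantitative content of the hypothesis $b>b_{min}$.'' It is not. Expanding, $(1-1/M)^L = 1 - L^{1-b}(1+o(1))$ and $1/(1-\eta)=1+\eta(1+o(1))$, so the requirement is $\eta\lesssim L^{1-b}$, i.e.\ $2.5\bigl(\tfrac{b}{b_{min}}-1\bigr)\geq b-1$, equivalently $\tfrac{b}{b_{min}}\,(2.5-b_{min})\geq 1.5$. When $b_{min}\geq 2.5$ this has \emph{no} solution in $b$, and when $1<b_{min}<2.5$ it requires $b>\tfrac{1.5\,b_{min}}{2.5-b_{min}}$, which is strictly larger than $b_{min}$. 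Since $b_{min}$ given by \eqref{eq:bmin_def} is typically far above $2.5$ (for example $b_{min}(\rho^2/D)$ is in the hundreds or thousands for modest parameter values), your argument does not establish the inequality of part (b) as stated. Note, however, that the gap is confined to the cosmetic form of the lemma: the bound you actually prove, $\expec[X_g\mid V_1=1]\leq\bigl(\tfrac{1}{1-\eta}+L^{-1/2}\bigr)\expec X$, combined with part (a) still yields
\[
P(X_g>0)\ \geq\ \frac{(1-\eta)\expec X}{\bigl(\tfrac{1}{1-\eta}+L^{-1/2}\bigr)\expec X}\ =\ \frac{(1-\eta)^2}{1+L^{-1/2}(1-\eta)}\ \longrightarrow\ 1,
\]
so Theorem~\ref{thm:ml_result} survives if one slightly relaxes the constant in part (b). To prove part (b) exactly as stated, one would need to upgrade your one-sided domination to the inequality $\expec[X_0\mid V_1=1]\leq \expec[X_0]$ (e.g.\ via a positive-association / FKG-type argument exploiting that conditioning on a \emph{small} overlap count $\sum_\alpha X_\alpha$ should not increase $X_0$), rather than paying the $1/(1-\eta)$ factor.
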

\begin{IEEEproof}
Due to the symmetry of the code construction, we have
\be
\begin{split}
\expec X_g &= e^{nR} P(V_1=1) \stackrel{(a)}{=} e^{nR} P(U_1=1) P(V_1=1|U_1=1) \\
&  = \expec X \cdot P(  \beta(1) \text{ is $\e$-good } \mid \beta(1) \text{ is a solution}  ).
\end{split}
\label{eq:pvu}
\ee
In \eqref{eq:pvu},  $(a)$ follows from the definitions of $V_i$ in \eqref{eq:vi_def} and $U_i$ in \eqref{eq:ui_def}.
Given that $\beta(1)$  is a solution, Lemma \ref{lem:good} shows that
\be
 \sum_{\alpha =  \frac{1}{L}, \ldots, \frac{L}{L}} X_{\alpha}(\beta(1)) < (\expec X )  L^{-0.5}.
\label{eq:b1good}
\ee
with probability at least $1 - \eta$. As $\e= L^{-0.5}$,   $\beta(1)$ is $\e$-good according to Definition \ref{def:good}  if \eqref{eq:b1good} is satisfied. Thus $\expec X_g$ in \eqref{eq:pvu} can be lower bounded as
\be
\expec X_g \geq  (1 - \eta) \expec X. 
\ee
For part (b),  first observe that the total number of solutions $X$ is an upper bound for the number of $\e$-good solutions $X_g$. Therefore
\be
\expec [ X_g | \, V_1 =1] \leq \expec [ X | \, V_1 =1]. 
\label{eq:exg_v1}
\ee
  Given that $\beta(1)$ is an $\e$-good solution, the expected number of solutions can be expressed as
  \be
  \begin{split}
 & \expec [ X | \, V_1 =1]  \\
  & =   \expec[X_0 (\beta(1)) \mid  V_1=1]   + \expec[\sum_{\alpha = \frac{1}{L}, \ldots, \frac{L}{L}} \hspace{-4pt} X_\alpha (\beta(1)) \mid  V_1=1].
  \end{split}
  \label{eq:ex_v1_split}
  \ee
  There are $(M-1)^L$ codewords that share \emph{no} common terms with $\beta(1)$. Each of these codewords is independent of $\beta(1)$, and thus independent of the event $V_1=1$.
  \be
  \begin{split}
  & \expec [X_0 (\beta(1)) \mid V_1=1]  = \expec[X_0(\beta(1))] \\
   & = (M-1)^L \, P(\abs{\tbfs - \mbf{A} \beta}^2 \leq D) \\
  & \leq M^L \, P(\abs{\tbfs - \mbf{A} \beta}^2 \leq D) \\
  & = \expec X.
  \end{split}
  \label{eq:ex0_bound}
  \ee
  Next,  note that conditioned on $\beta(1)$ being an $\e$-good solution (i.e., $V_1=1$),
  \be
  \sum_{\alpha = \frac{1}{L}, \ldots, \frac{L}{L}} \hspace{-4pt} X_\alpha (\beta(1)) < \e \, \expec X
  \label{eq:exalph_bound}
  \ee
  \emph{with certainty}. This follows from the definition of $\e$-good  in \eqref{eq:good_def}.  Using \eqref{eq:ex0_bound} and   \eqref{eq:exalph_bound} in \eqref{eq:ex_v1_split}, we conclude that
  \be
  \expec [ X | \, V_1 =1]  < (1 + \e)\expec X.
  \label{eq:x_v1_bnd}
  \ee
  Combining \eqref{eq:x_v1_bnd} with \eqref{eq:exg_v1} completes the proof of Lemma \ref{lem:Xg}.
  \end{IEEEproof}

Using Lemma \ref{lem:Xg} in \eqref{eq:2mom_v}, we obtain
\be
\begin{split}
& P(X_g > 0) \geq  \frac{\expec X_g}{\expec [ X_g | \, V_1 =1]} \geq \frac{(1-\eta)}{1+\e}  \\
& =  \frac{1-L^{-2.5 (\frac{b}{b_{min}(\rho^2/D)} - 1)}}{1+L^{-1/2}},
\end{split}
\ee
where the last equality is obtained by using the definition of $\eta$ in \eqref{eq:eta_def} and $\e=L^{-0.5}$. Hence the probability of the existence of at least one good solution goes to $1$ as $L \to \infty$. Thus we have shown that for any $\rho^2 \in (D, \gamma^2)$, the quantity \[ P(\mc{E}(\tilde{\bfs}) \mid \abs{\tilde{\bfs}}^2=\rho^2)\] in \eqref{eq:err_bound} tends to zero whenever  $R > \tfrac{1}{2} \log \tfrac{\rho^2}{D}$ and $b > b_{min}(\frac{\rho^2}{D})$. Combining this with \eqref{eq:good_dist_bnd}--\eqref{eq:s_erg},we conclude that that the probability that
\[
\abs{\bfs - \mathbf{A} \hat{\beta}}^2 \leq D + \frac{\kappa}{n}
\]
goes to one as $n \to \infty$.   As $\gamma^2 > \sigma^2$ can be chosen arbitrarily close to $\sigma^2$, the proof of Theorem \ref{thm:ml_result} is complete.

\subsection{Proof of Theorem \ref{thm:err_exp}}  \label{subsec:thm2_proof} 

The code construction is as described in Section \ref{subsec:proof_begin}, with the parameter $b$ now chosen to satisfy \eqref{eq:bmin_exp}.  Recall the definition of an 
$\e$-good solution in Definition \ref{def:good}. We follow the set-up of Section \ref{subsec:thm1_proof} and count the number of $\e$-good solutions, for an appropriately defined $\e$.  As before, we want an upper bound for the probability of the event $X_g=0$, where the number of $\e$-good solutions $X_g$ is defined in \eqref{eq:xg_count}.

Theorem \ref{thm:err_exp} is obtained  using Suen's correlation inequality to upper bound on the probability of the event $X_g=0$. Suen's inequality yields a sharper upper bound than the second MoM. We use it to prove that the probability of  $X_g=0$ decays \emph{super-exponentially} in $L$. In comparison, the second MoM only guarantees a polynomial decay. 

We begin with some definitions required for Suen's inequality.

\begin{defi}
[Dependency Graphs  \cite{JansonBook}] Let $\{V_i\}_{i \in \mc{I}}$ be a family of random variables (defined on a common probability space). A dependency graph for $\{V_i\}$ is any graph $\Gamma$ with vertex set $V(\Gamma)= \mc{I}$ whose set of edges satisfies the following property: if $A$ and $B$ are two disjoint subsets of  $\mc{I}$ such that there are no edges with one vertex in $A$ and the other in $B$, then the families $\{V_i\}_{i \in A}$ and $\{V_i\}_{i \in B}$
are independent.
\end{defi}

\begin{fact}
\cite[Example $1.5$, p.11]{JansonBook}
Suppose $\{Y_\alpha\}_{\alpha \in \mc{A}}$ is a family of independent random variables, and each $V_i, i\in \mc{I}$ is a function of the variables
$\{Y_\alpha\}_{\alpha \in A_i}$ for some subset $A_i \subseteq \mc{A}$. Then the graph with vertex set $\mc{I}$ and edge set
$\{ij : A_i \cap A_j \neq \emptyset\}$ is a dependency graph for  $\{U_i\}_{i \in \mc{I}}$.
\label{fact:depgraph_ex}
\end{fact}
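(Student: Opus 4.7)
The plan is to verify the defining property of a dependency graph directly, by reducing the independence of the two families $\{V_i\}_{i \in A}, \{V_j\}_{j \in B}$ to the independence of two disjoint sub-collections of the underlying variables $\{Y_\alpha\}$. Concretely, I would fix any two disjoint subsets $A, B \subseteq \mc{I}$ for which the proposed graph contains no edge between $A$ and $B$. By the prescribed edge rule $\{ij : A_i \cap A_j \neq \emptyset\}$, absence of such edges is equivalent to $A_i \cap A_j = \emptyset$ for every $i \in A$ and $j \in B$. It follows immediately that the pooled index sets
\[
\mc{A}_A := \bigcup_{i \in A} A_i, \qquad \mc{A}_B := \bigcup_{j \in B} A_j,
\]
are disjoint subsets of $\mc{A}$.

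The second step is the standard ``grouping'' fact for independent random variables. Since $\{Y_\alpha\}_{\alpha \in \mc{A}}$ are mutually independent and $\mc{A}_A \cap \mc{A}_B = \emptyset$, the sub-families $\{Y_\alpha\}_{\alpha \in \mc{A}_A}$ and $\{Y_\alpha\}_{\alpha \in \mc{A}_B}$ generate independent $\sigma$-algebras. By the hypothesis of the fact, every $V_i$ with $i \in A$ is a measurable function of $\{Y_\alpha\}_{\alpha \in A_i} \subseteq \{Y_\alpha\}_{\alpha \in \mc{A}_A}$, so the entire family $\{V_i\}_{i \in A}$ is measurable with respect to $\sigma(\{Y_\alpha : \alpha \in \mc{A}_A\})$; the symmetric statement holds for $\{V_j\}_{j \in B}$. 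Measurable functions of independent $\sigma$-algebras are independent, so $\{V_i\}_{i \in A}$ and $\{V_j\}_{j \in B}$ are independent, which is exactly the dependency-graph condition.

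The only conceptual step is the passage from ``disjoint underlying indices'' to ``independent $\sigma$-algebras'', and I do not anticipate any real obstacle: no combinatorial or analytic machinery is required, and the argument is essentially a one-line consequence of the grouping lemma. The only mildly subtle point is measurability, but this is automatic once ``function of'' is interpreted in the standard Borel-measurable sense, and the result therefore holds without any further assumption on the joint distribution of the $V_i$'s beyond independence of the $Y_\alpha$'s.
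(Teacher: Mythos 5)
The paper does not prove this Fact; it cites it directly from Janson's book as a known result. Your proof is correct and is precisely the standard textbook argument: absence of edges between $A$ and $B$ forces the pooled index sets $\bigcup_{i\in A}A_i$ and $\bigcup_{j\in B}A_j$ to be disjoint, the grouping lemma for independent families then gives independence of the two generated $\sigma$-algebras, and measurability of each $V_i$ with respect to the appropriate one of these $\sigma$-algebras finishes the verification of the dependency-graph property. (Incidentally, the paper's statement has a typo, referring to $\{U_i\}$ where $\{V_i\}$ is meant; your proof correctly addresses the $V_i$'s.)
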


In our setting, we fix $\e =L^{-3/2}$, let $V_i$ be the  indicator the random variable defined in \eqref{eq:vi_def}. Note that  $V_i$ is one if and only if  $\beta(i)$ is an $\e$-good solution. The set of codewords that share at least one common term with $\beta(i)$ are the ones that play a role in determining whether $\beta(i)$ is an $\e$-good solution or not. Hence, the graph $\Gamma$ with vertex set $V(\Gamma) = \{1,\ldots,e^{nR} \}$  and edge set $e(\Gamma)$ given by
\ben
\begin{split}  
&  \{ ij:  i\neq j \text{ and the codewords } \beta(i), \beta(j) \\
& \quad  \text { share at least one common term} \} 
 \end{split} \een 
 is a dependency graph for the family
$\{ V_i \}_{i=1}^{e^{nR}}$. This follows from Fact \ref{fact:depgraph_ex} by observing that: i) each  $V_i$ is a function of the columns of  $\mathbf{A}$ that define $\beta(i)$ and all other codewords that share at least one common term with $\beta(i)$; and ii) the columns of $\mathbf{A}$ are generated independently of one another.
\label{rem:depgraph}

 For a given codeword $\beta(i)$, there are ${L \choose r} (M-1)^{L-r}$ other codewords that have exactly $r$ terms in common with $\beta(i)$, for $0 \leq r \leq (L-1)$.   Therefore each vertex in the dependency graph for the family $\{V_i\}_{i=1}^{e^{nR}}$ is connected to
\[ \sum_{r=1}^{L-1} {L \choose r} (M-1)^{L-r} = M^L - 1 - (M-1)^L \]
other vertices.

\begin{fact}[Suen's Inequality \cite{JansonBook}] Let $V_i \sim \text{Bern}(p_i),  i\in \mc{I}$, be a finite family of Bernoulli random variables having a dependency graph $\Gamma$.  Write $i \sim j$ if $ij$ is an edge in $\Gamma$. Define
\ben
\begin{split}
\lambda  = \sum_{i\in \mc{I}} \expec V_i,
 \  \, \Delta = \frac{1}{2} \sum_{i \in \mc{I}} \sum_{j \sim i} \expec(V_i V_j),
\  \, \delta =  \max_{i\in \mc{I}} \sum_{k \sim i}  \expec V_k.
\end{split}
\een
Then
\be P\left(\sum_{i \in \mc{I}} V_i =0\right) \leq \exp\left(-\min \left\{ \frac{\lambda}{2}, \frac{\lambda}{6\delta}, \frac{\lambda^2}{8\Delta}  \right\} \right).
 \label{eq:suens_ineq} \ee
\end{fact}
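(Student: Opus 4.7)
The plan is to prove Suen's inequality by a moment-generating-function argument combined with an inductive expansion that carefully exploits the dependency graph $\Gamma$. The starting point is the deterministic pointwise bound $\mathbf{1}\{X = 0\} \leq e^{-tX}$, valid for $X := \sum_{i \in \mc{I}} V_i \geq 0$ and any $t > 0$, which gives $P(X=0) \leq \expec\bigl[\prod_i e^{-tV_i}\bigr]$. The bulk of the work is then to derive a \emph{lower} bound on $-\log \expec[e^{-tX}]$, optimize over $t$, and recover the three-way minimum appearing in \eqref{eq:suens_ineq}.

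First I would record the fundamental consequence of the dependency-graph definition: for every $j$, the variable $V_j$ is independent of the family $\{V_k : k \notin N[j]\}$, where $N[j] = \{j\} \cup \{k : k \sim j\}$ is the closed neighborhood of $j$ in $\Gamma$. Using this, I would enumerate the vertices $1, 2, \ldots, |\mc{I}|$ in an arbitrary order and write $\expec\bigl[\prod_i e^{-tV_i}\bigr]$ as a telescoping product of conditional expectations. At each step I would apply a second-order Taylor expansion to $\log(1 - p_j(1-e^{-t}))$, using the elementary inequalities $\log(1-x) \leq -x$ and $1 - e^{-tv} \leq (1-e^{-t})v$ for $v \in \{0,1\}$. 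The linear terms accumulate to $(1-e^{-t})\lambda$, while the cross-contributions from pairs $j \sim k$ (which arise when one replaces conditional expectations by unconditional ones) accumulate to a quadratic correction proportional to $(1-e^{-t})^2 \Delta$. The exponential factor $e^{-(1-e^{-t})\delta}$ emerges as the price paid for approximating conditional expectations by marginal ones over neighborhoods of bounded $\delta$-mass.

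The key technical step, and the principal obstacle, is to establish the inequality
\begin{equation*}
-\log \expec[e^{-tX}] \;\geq\; (1-e^{-t})\, \lambda\, e^{-(1-e^{-t})\delta} \;-\; (1-e^{-t})^2\, \Delta\, e^{-2(1-e^{-t})\delta}.
\end{equation*}
Proving this cleanly requires a careful induction in which one keeps track of how the "revealed" portion of $\Gamma$ interacts with the remaining unrevealed neighborhood. Special care is needed because adjacency in $\Gamma$ is symmetric but the induction order is not, so pair-contributions to $\Delta$ must be counted with the correct factor of $\tfrac{1}{2}$ to avoid double-counting; this is exactly the source of the factor $\tfrac{1}{2}$ in the definition of $\Delta$.

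Finally, I would optimize the above lower bound over $t > 0$. Three regimes appear: (i) when $\Delta$ is the binding constraint, the optimal $t$ satisfies $(1-e^{-t}) \asymp \lambda/(2\Delta)$, producing the bound $\lambda^2/(8\Delta)$; (ii) when $\delta$ is binding, one takes $t$ of order unity so that $(1-e^{-t})\delta$ is a small constant, yielding the bound $\lambda/(6\delta)$; (iii) when neither $\Delta$ nor $\delta$ dominates, letting $t \to \infty$ gives the baseline bound $\lambda/2$. Taking the best of the three yields \eqref{eq:suens_ineq}. The precise constants $2, 6, 8$ are not deep but come from exact calculus optimizations in each regime, together with standard convexity estimates such as $1-e^{-t} \geq t/(1+t/2)$ for $t \in [0,1]$; no additional probabilistic ideas beyond the inductive inequality above are needed.
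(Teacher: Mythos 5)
The paper offers no proof of this Fact at all---it is quoted verbatim from \cite{JansonBook}---so your attempt must be judged against the standard argument in that reference, and there it has a genuine gap located exactly where the entire difficulty of Suen's inequality lives. Your displayed ``key technical step,'' the claim that $-\log \expec[e^{-tX}] \geq (1-e^{-t})\lambda e^{-(1-e^{-t})\delta} - (1-e^{-t})^2 \Delta e^{-2(1-e^{-t})\delta}$, is simply asserted and deferred to ``a careful induction'' that is never carried out; everything you do prove (the pointwise bound $\mathbf{1}\{X=0\}\leq e^{-tX}$ and the calculus over $t$) is routine. Moreover, the asserted bound is not the one known to hold: Suen's inequality in the form proved by Suen and simplified by Janson reads $P(X=0)\leq \exp\left(-\lambda + \Delta e^{2\delta}\right)$, i.e.\ the pair term is \emph{inflated} by $e^{2\delta}$, whereas your version \emph{deflates} both terms by $e^{-q\delta}$ and $e^{-2q\delta}$ (writing $q=1-e^{-t}$). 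You give no derivation and no reference for this variant, and the heuristic you offer for where the factor $e^{-(1-e^{-t})\delta}$ comes from does not match how the neighborhood correction actually enters the conditioning/induction argument, so even the target of your proposed induction is in doubt. As it stands, the proposal reduces the statement to an unproved inequality that is at least as hard as the statement itself.

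There is a clean repair that preserves your outline. Observe that $\expec[e^{-tX}]=\expec[(1-q)^{X}]=P(X_q=0)$, where $X_q=\sum_i V_i\xi_i$ is obtained by independently retaining each $V_i$ with probability $q=1-e^{-t}$; the thinned family has the same dependency graph $\Gamma$ with parameters $q\lambda$, $q^2\Delta$, $q\delta$. Applying the (cited, or separately proved) bound $P(X_q=0)\leq \exp\left(-q\lambda+q^2\Delta e^{2q\delta}\right)$ and choosing $q=\min\{1,\ 1/(3\delta),\ \lambda/(4\Delta)\}$ gives an exponent at least $q\lambda\left(1-\tfrac{1}{4}e^{2/3}\right)\geq 0.51\,q\lambda$, which in the three cases yields at least $\lambda/2$, $\lambda/(6\delta)$, and $\lambda^2/(8\Delta)$ respectively, i.e.\ exactly \eqref{eq:suens_ineq}. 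This thinning step is precisely your Laplace-transform optimization in disguise, and it is how the min-of-three form is obtained in \cite{JansonBook}; but the inductive proof of $P(X=0)\leq\exp(-\lambda+\Delta e^{2\delta})$ is the real content of the Fact, and your proposal neither supplies it nor correctly states what it should establish.
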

We apply Suen's inequality  with the dependency graph specified above for $\{V_i\}_{i=1}^{e^{nR}}$ to compute an upper bound for $P(X_g =0)$, where $X_g= \sum_{i=1}^{e^{nR}} V_i$ is the total number of $\e$-good solutions for $\e =L^{-3/2}$. Note that the $\e$ chosen here  is smaller than the value of $L^{-1/2}$ used for Theorem \ref{thm:ml_result}. This smaller value is required to prove the super-exponential decay of the excess-distortion probability via Suen's inequality. We also need a stronger version of Lemma \ref{lem:good}.
\begin{lem}
\label{lem:good_strong}
Let $R> \tfrac{1}{2}\log \frac{\rho^2}{D}$.
If $\beta \in \mc{B}_{M,L}$ is a solution, then for sufficiently large $L$ 
\be 
P\left( X_\alpha(\beta)  \leq   L^{-5/2} \, \expec X, \   \text{ for } \tfrac{1}{L}\leq \alpha \leq  \tfrac{L}{L} \right) \geq 1- \xi
\ee
where
\be \xi = L^{-2.5 (\frac{b}{b_{min}(\rho^2/D)} - \frac{7}{5})}.  \label{eq:xi_def} \ee
\end{lem}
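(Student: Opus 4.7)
The statement strengthens Lemma \ref{lem:good} in two places: the per-$\alpha$ threshold is tightened from $L^{-3/2}\expec X$ to $L^{-5/2}\expec X$, and the range of $\alpha$ is enlarged from $\{1/L,\ldots,(L-1)/L\}$ to $\{1/L,\ldots,1\}$. My plan is to re-use the argument of Lemma \ref{lem:good} from Section \ref{sec:proof_lem}, making only the changes needed to absorb these two modifications; the hypothesis on $b$ correspondingly tightens from $b > b_{min}(\rho^2/D)$ to $b > (7/5)\,b_{min}(\rho^2/D)$, and the probability guarantee loosens from $\eta$ to $\xi = L\,\eta$.

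The new endpoint $\alpha = 1$ is essentially free: two elements of $\mcb$ that agree on all $L$ non-zero positions must be identical, since every non-zero entry of every element of $\mcb$ is fixed at the common value $c/\sqrt{L}$. Hence $X_1(\beta) \equiv 1$. By the lower bound \eqref{eq:EX} together with $R > \tfrac{1}{2}\log(\rho^2/D)$, the quantity $L^{-5/2}\expec X$ grows exponentially in $n$, so $X_1(\beta) \leq L^{-5/2}\expec X$ in fact holds deterministically for all sufficiently large $L$.

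For $\alpha \in \{1/L,\ldots,(L-1)/L\}$, I would trace the proof of Lemma \ref{lem:good} and observe that its concluding tail estimate is essentially a Markov-type bound in which the multiplicative threshold $t$ (equal to $L^{-3/2}$ in the original statement) enters only once, via a truncation step that removes the atypical design matrices responsible for the blow-up $\expec[X_\alpha(\beta)] \sim e^{n\Delta_\alpha}\expec X$. Replacing $t = L^{-3/2}$ by $t = L^{-5/2}$ therefore inflates the per-$\alpha$ bound by a single factor of $L$, and the arithmetic identity
\begin{equation*}
L \cdot \eta \;=\; L^{\,1 - 2.5(b/b_{min}(\rho^2/D) - 1)} \;=\; L^{-2.5(b/b_{min}(\rho^2/D) - 7/5)} \;=\; \xi
\end{equation*}
shows that the resulting failure probability, after re-doing the union bound over $\alpha$, is exactly the target $\xi$ provided $2.5(b/b_{min}(\rho^2/D) - 7/5) > 0$, which is the hypothesis.

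The main obstacle I expect is verifying that the tail bound extracted from Section \ref{sec:proof_lem} really carries only a \emph{linear} dependence on $1/t$, rather than a higher power that would arise from, say, an intermediate Chebyshev step. This amounts to checking that the truncation-plus-Markov skeleton in the proof of Lemma \ref{lem:good} can be applied with the smaller $t$ without structural modification; since the exponential content of every large-deviation estimate in that proof depends only on $\Delta_\alpha$ and on the source/codeword statistics, not on $t$, the verification is essentially bookkeeping rather than a new technical argument.
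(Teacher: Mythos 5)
Your proposal is correct and mirrors the paper's own proof, which simply reruns the argument of Lemma \ref{lem:good} with $L^{-3/2}$ replaced by $L^{-5/2}$ (so $\frac{3}{2L}$ becomes $\frac{5}{2L}$ in \eqref{eq:div_by_llogl}, pushing the required $b$ from $\frac{2.5R}{\Lambda(0)}=b_{min}$ up to $\frac{3.5R}{\Lambda(0)}=\frac{7}{5}b_{min}$ and worsening the tail from $\eta$ to $L\eta=\xi$). Your observation that the added endpoint $\alpha=1$ is trivial since $X_1(\beta)\equiv 1$ while $L^{-5/2}\expec X$ grows exponentially is a clean way to dispose of the extended range, consistent with (and slightly more explicit than) what the paper does.
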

\begin{IEEEproof}
The proof is nearly identical to that of Lemma \ref{lem:good} given in Section \ref{sec:proof_lem}, with the terms $L^{-3/2}$ and $\frac{3}{2L}$ replaced by 
$L^{-5/2}$ and $\frac{5}{2L}$, respectively, throughout the lemma.  Thus we obtain the following condition on $b$ which is the analog of \eqref{eq:bmin_cond0}.
\be
\begin{split}
& b  > \\
&    \max_{\alpha \in \{\frac{1}{L}, \ldots, \frac{L}{L} \}} \hspace{-2pt} \left\{ \frac{R}{(\min\{ \alpha \Lambda(\alpha), c_1 \})}  \left[ \min \left\{ \alpha, \bar{\alpha}, \frac{\log 2} {\log L} \right\}  +  \frac{5 }{ 2L} \right]  \right\} \\
 &  =  \frac{ 3.5 R}{\Lambda(0)} + O\left( \frac{1}{L} \right) \\  
& = \frac{7}{5} b_{min}\left( \frac{\rho^2}{D} \right) + O\left( \frac{1}{L} \right).
\end{split}
\label{eq:bmin_cond0}
\ee
The result is then obtained using arguments analogous to \eqref{eq:pxalph0} and \eqref{eq:pxalph1}.
\end{IEEEproof}

We now compute each of the three terms in the RHS of Suen's inequality.  

\textbf{First Term $\frac{\lambda}{2}$}: We have
\be
\begin{split}
& \lambda = \sum_{i =1}^{e^{nR}} \expec V_i = \expec X_g  \\
&  \stackrel{(a)}{=}  \expec X \cdot P(  \beta(1) \text{ is $\e$-good } \mid \beta(1) \text{ is a solution} ),
\end{split}
\label{eq:lambda_expand}
\ee
where $(a)$ follows from \eqref{eq:pvu}.  Given that $\beta(1)$  is a solution, Lemma \ref{lem:good_strong} shows that
\be
 \sum_{\alpha =  \frac{1}{L}, \ldots, \frac{L}{L}} X_{\alpha}(\beta(1)) < (\expec X )  L^{-3/2}
\label{eq:b1good_strong}
\ee
with probability at least $1 - \xi$. As $\e= L^{-3/2}$,   $\beta(1)$ is $\e$-good according to Definition \ref{def:good}  if \eqref{eq:b1good_strong} is satisfied. Thus  the RHS of \eqref{eq:lambda_expand} can be lower bounded as follows.
\be
\begin{split}
 \lambda & =  \expec X \,  \cdot \,  P(  \beta(1) \text{ is $\e$-good } \mid \beta(1) \text{ is a solution} ) \\
&  \geq   \expec X \, \cdot \, (1 - \xi).
\end{split}
\label{eq:lambda_lb}
\ee
Using the expression from \eqref{eq:EX} for the expected number of solutions $\expec X$, we have
\be
\lambda \geq (1 - \xi) \frac{\kappa}{\sqrt{n}} e^{n (R - \frac{1}{2} \log \frac{\rho^2}{D})},
\label{eq:lambda_lb1}
\ee
where $\kappa >0$ is a  constant. For $b > \frac{7}{5} b_{min}(\rho^2/D)$,  \eqref{eq:xi_def} implies  that $\xi$ approaches $1$ with growing $L$.

\textbf{Second term ${\lambda}/(6\delta)$}: Due to the symmetry of the code construction, we have
\be
\begin{split}
& \delta   =  \max_{i\in \{1, \ldots, e^{nR} \}} \sum_{k \sim i}  P\left(V_k  =1 \right) \\
& = \sum_{k \sim i}  P\left(V_k =1  \right) \quad \forall i \in \{1, \ldots, e^{nR} \} \\
&= \sum_{r=1}^{L-1} {L \choose r} (M-1)^{L-r} \cdot P\left(V_1 =1 \right) \\
& = \left(M^L - 1 - (M-1)^L\right) P\left(V_1 =1 \right).
\end{split}
\ee
Combining this together with the fact that \[ \lambda =  \sum_{i=1}^{M^L} \expec V_i = M^L \, P(V_1 =1), \]
we obtain
\be
\frac{\lambda}{\delta}  = \frac{M^L}{M^L - 1 - (M-1)^L} = \frac{1}{1- L^{-bL} - (1- L^{-b})^L},
\label{eq:lamb_over_del}
\ee
where the second equality is obtained by substituting $M=L^b$. Using a Taylor series bound for the denominator of \eqref{eq:lamb_over_del} (see \cite[Sec. V]{RVGaussianML} for details) yields the following lower bound for sufficiently large $L$:
\be
\frac{\lambda}{\delta} \geq \frac{L^{b-1}}{2}.
\label{eq:lamb_Del_lb}
\ee

\textbf{Third Term $\lambda^2/(8\Delta)$}: We have
\be
\begin{split}
& \Delta  = {\frac{1}{2} \sum_{i=1}^{M^L} \sum_{j \sim i} \expec\left[V_i V_j \right] }  \\
&  =   {\frac{1}{2} \sum_{i=1}^{M^L} P(V_i=1) \sum_{j \sim i}  P(V_j=1 \mid V_i=1)}  \\
& \stackrel{(a)}{=}  \frac{1}{2} \,  \expec X_g \sum_{j \sim 1}  P(V_j=1 \mid V_1=1)  \\
& =   \frac{1}{2} \,  \expec X_g  \,  \expec\Big [ \sum_{j \sim 1} \mathbf{1}\{V_j=1\}  \mid V_1=1\Big ] \\
&  \stackrel{(b)}{\leq} \frac{1}{2} \,  \expec X_g  \, \expec\Bigg [\sum_{\alpha = \frac{1}{L}, \ldots, \frac{L-1}{L}} \hspace{-4pt} X_{\alpha} (\beta(1)) \mid  V_1=1 \Bigg].
\end{split}
\label{eq:Delta_expand}
\ee
In \eqref{eq:Delta_expand}, $(a)$ holds because of the symmetry of the code construction. The inequality $(b)$ is obtained as follows. The number of $\e$-good solutions that share common terms with $\beta(1)$ is bounded above by the total number of solutions sharing common terms with $\beta(1)$.  The latter quantity can be expressed as the sum of the number of solutions sharing  exactly $\alpha L$ common terms with $\beta(1)$, for $\alpha \in \{\tfrac{1}{L}, \ldots, \tfrac{L-1}{L} \}$.

Conditioned on $V_1=1$, i.e., the event that $\beta(1)$ is a $\e$-good solution, the total number of solutions that share common terms with $\beta(1)$ is bounded by $\e \, \expec X$. Therefore, from \eqref{eq:Delta_expand} we have
\be
\begin{split}
& \Delta \leq   \frac{1}{2} \expec X_g  \, \expec\Bigg [\sum_{\alpha = \frac{1}{L}, \ldots, \frac{L-1}{L}} \hspace{-4pt} X_{\alpha} (\beta(1)) \mid  V_1=1 \Bigg]  \\
& \leq \frac{1}{2} \left(\expec X_g \right) (L^{-3/2}\,  \expec X) \leq \frac{L^{-3/2}}{2} (\expec X)^2,
\end{split}
\label{eq:Delta_ub}
\ee
where we have used $\e = L^{-3/2}$, and the fact that $X_g \leq X$. Combining \eqref{eq:Delta_ub} and \eqref{eq:lambda_lb}, we obtain
\be
\frac{\lambda^2}{8 \Delta} \geq \frac{ (1-\xi)^2 (\expec X)^2}{4 L^{-3/2}  (\expec X)^2} \geq \kappa L^{3/2},
\label{eq:lamb_Del2_lb}
\ee
where $\kappa$ is a strictly positive constant.

\textbf{Applying Suen's inequality}:  Using the lower bounds obtained in \eqref{eq:lambda_lb1},  \eqref{eq:lamb_Del_lb}, and  \eqref{eq:lamb_Del2_lb}  in  \eqref{eq:suens_ineq}, we obtain
\be
\begin{split}
& P\left( \sum_{i=1}^{e^{nR}} V_i \right)  \\
& \leq \exp \left( - \kappa \, \min \left\{ e^{n (R - \frac{1}{2} \log \frac{\rho^2}{D} -\frac{\log n}{2 n})}, \, L^{b-1}, \, L^{3/2} \right\} \right),
\end{split}
\label{eq:sum_vi_0}
\ee
where $\kappa$ is a positive constant. Recalling from \eqref{eq:rel_nL} that $L = \Theta( \tfrac{n}{\log n})$ and  $R > \frac{1}{2} \ln \frac{\rho^2}{D}$, we see that for $b >2$,
\be P\left( \sum_{i=1}^{e^{nR}} V_i \right) \leq \exp \left( - \kappa n^{1+c} \right),  \label{eq:sum_vi_bound} \ee
where $c >0$ is a constant. Note that the condition $b > \frac{7}{5} b_{min}(\rho^2/D)$  was also needed to  obtain \eqref{eq:sum_vi_0} via Suen's inequality. In particular, this condition  on $b$ is required for $\xi$ in Lemma \ref{lem:good_strong} to go to $0$ with growing $L$.

Using \eqref{eq:sum_vi_bound} in \eqref{eq:err_bound},  we conclude that for any $\gamma^2 \in (\sigma^2, D^{e^2R})$ the probability of excess distortion can be bounded as  
\be
\begin{split}
P_{e,n} & \leq P(\abs{\bfs}^2 \geq \gamma^2) +  \max_{\rho^2 \in (D, \gamma^2)} P(\mc{E}(\tilde{\bfs}) \mid \abs{\tilde{\bfs}}^2 = \rho^2) \\
& \leq P(\abs{\bfs}^2 \geq \gamma^2)  + \exp(-\kappa n^{1+c}),
\end{split}
\label{eq:Pen_exp_bound}
\ee
provided the parameter $b$ satisfies
\be
b > \max_{\rho^2 \in (D, \gamma^2)} \max \left\{  2, \  \frac{7}{5} b_{min}\left( \rho^2/D \right)  \right \}.
\label{eq:b_bound}
\ee
It can be verified from the definition in \eqref{eq:bmin_def} that $b_{min}(x)$ is strictly increasing in $x \in (1, e^{2R})$.  Therefore, the maximum on the RHS of \eqref{eq:b_bound} is  bounded by $\max \left\{  2, \  \frac{7}{5} b_{min}\left( \gamma^2/D \right)  \right \}$. Choosing $b$ to be larger than this value will guarantee that \eqref{eq:Pen_exp_bound} holds. This completes the proof of the theorem.


\section{Proof  of Lemma \ref{lem:good}} \label{sec:proof_lem}

We begin by listing three useful properties of the function $f(x,y,z)$ defined in \eqref{eq:fdef}. Recall that the probability that an i.i.d. $\mc{N}(0,y)$ sequence is within 
distortion  within distortion $z$ of a norm-$x$ sequence is  $\sim e^{-n f(x,y,z)}$.
\begin{enumerate}
\item For fixed $x,y$, $f$ is strictly decreasing in $z \in (0,x+y)$.
\item For fixed $y,z$,  $f$ is strictly increasing in $x \in (z, \infty)$.
\item For fixed $x,z$ and $x >z$, $f$ is convex in $y$ and attains its minimum value of $\tfrac{1}{2} \log \tfrac{x}{z}$ at $y=x-z$.
\end{enumerate}
These properties are straightforward to verify from the definition \eqref{eq:fdef} using elementary calculus.

For $\mc{K} \subseteq \{1, \ldots, L \}$, let $\beta_\mc{K}$ denote the restriction of $\beta$ to the set $\mc{K}$, i.e., $\beta_{\mc{K}}$ coincides with $\beta$ in the sections indicated by $\mc{K}$ and the remaining entries  are all equal to  zero. For example, if $\mc{K} = \{2, 3\}$,  the second and third sections of $\beta_{\mc{K}}$  will each have one non-zero entry, the other entries are all zeros.

\begin{defi}
Given that $\beta$ is a solution, for $\alpha = \frac{1}{L}, \ldots, \frac{L}{L}$, define $\mc{F}_\alpha(\beta)$ as the event that
\[  \abs{ \tbfs - \mbf{A} \beta_{\mc{K}} }^2 \geq D_{\alpha}  \]
for every size $\alpha L$ subset $\mc{K} \subset \{1, \ldots, L \}$, where $D_\alpha$ 
is the solution to the equation
\be
R \alpha = f(\rho^2, (\rho^2 - D)\alpha, D_{\alpha}).
\label{eq:Dalph}
\ee
\label{def:falph}
\end{defi}
The intuition behind choosing $D_{\alpha}$ according to \eqref{eq:Dalph} is the following. Any subset of $\alpha L$ sections of the design matrix $\mathbf{A}$ defines a  SPARC of rate $R\alpha$, with each codeword consisting of i.i.d $\mc{N}(0, (\rho^2 - D)\alpha)$ entries. (Note that the entries of a single codeword are i.i.d., though the codewords are dependent due to the SPARC structure.) The probability that a codeword from this rate $R\alpha$ code is within distortion $z$ of the source sequence $\tbfs$
is $\sim e^{-n f(\rho^2, (\rho^2 - D)\alpha, z)}$. Hence the expected number of codewords in the rate $R \alpha$ codebook within distortion $z$ of $\tbfs$ is
\[ e^{nR\alpha} e^{-n f(\rho^2, (\rho^2 - D)\alpha, z)}. \]
As $f(\rho^2, (\rho^2 - D)\alpha, z)$ is a strictly decreasing function of $z$ in $(0, \rho^2)$, \eqref{eq:Dalph} says that $D_{\alpha}$ is the smallest expected distortion for any rate $R \alpha$ code with codeword entries chosen i.i.d. $\mc{N}(0, (\rho^2 - D)\alpha)$. \footnote{Note that $D_\alpha$ is \emph{not} the distortion-rate function at rate $R \alpha$ as the codewords are not chosen with the optimal variance for rate $R\alpha$.} For $z < D_\alpha$, the expected number of codewords within distortion $z$ of $\tbfs$ is vanishingly small.

Conditioned on $\mc{F}_\alpha(\beta)$, the idea is that any $\alpha L$ sections of $\beta$ cannot by themselves represent $\tbfs$ with distortion less than $D_{\alpha}$. In other words, in a typical realization of the design matrix, all the sections contribute roughly equal amounts to finding a codeword within $D$ of $\tbfs$. On the other hand, if some $\alpha L$ sections of the SPARC can represent $\tbfs$ with distortion less than $D_\alpha$,  the remaining $\bar{\alpha} L$ sections have  ``less work" to do---this creates a proliferation of solutions that share these $\alpha L$ common sections with $\beta$. Consequently, the total number of solutions is much greater than $\expec X$ for these atypical design matrices.

The first step in  proving the lemma is to show that for any $\beta$,  the event $\mc{F}_\alpha(\beta)$ holds  w.h.p.  The second step  is showing that  when $\mc{F}_\alpha(\beta)$ holds, the expected number of solutions that share {any} common terms  with $\beta$ is small compared to $\expec X$. Indeed, using $\mc{F}_\alpha(\beta)$ we can write
\be
\label{eq:pxa_ub1}
\begin{split}
 & P\left( X_\alpha(\beta)   >  L^{-3/2} \expec X \right)  \\
 & =P\left( \{ X_\alpha(\beta)  >  L^{-3/2} \expec X\},  \ \mc{F}^c_\alpha(\beta) \right)  \\
 & \quad +  P\left( \{ X_\alpha(\beta)  >  L^{-3/2}\, \expec X\}, \ \mc{F}_\alpha(\beta) \right)  \\
& \leq  P(\mc{F}^c_\alpha(\beta)) + P(\mc{F}_\alpha(\beta)) P\left(  X_\alpha(\beta)  <  L^{-3/2} \expec X  \mid \mc{F}_\alpha(\beta) \right)  \\
& \leq   P(\mc{F}^c_\alpha(\beta)) + \frac{\expec[ X_\alpha(\beta) \mid  \mc{F}_\alpha(\beta)] }{L^{-3/2} \, \expec X}
\end{split}
\ee
where  the last line follows from Markov's inequality. We will show that the probability on the left side of \eqref{eq:pxa_ub1} is small  for any solution $\beta$ by showing that each of the two terms on the RHS of \eqref{eq:pxa_ub1} is small. First, a bound on $D_\alpha$.
\begin{lem} For $\alpha \in (0,1]$,
\be R \alpha  >  f( \rho^2, \, (\rho^2 - D)\alpha, \, \rho^2 \bar{\alpha} + D\alpha ) = \frac{1}{2} \log \frac{\rho^2}{ \rho^2 \bar{\alpha} + D\alpha }.
\label{eq:lem1}\ee
Consequently, $D_\alpha < \rho^2 \bar{\alpha} + D\alpha$ for $\alpha = \tfrac{1}{L}, \ldots, \tfrac{L}{L}$.
\label{lem:dalph_ub0}
\end{lem}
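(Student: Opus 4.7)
The plan is to exploit the third property of $f$ listed just before the lemma, together with a one-line application of AM–GM (equivalently, strict concavity of $\log$).

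First I would establish the equality in \eqref{eq:lem1}. Set $x=\rho^2$ and $z=\rho^2\bar{\alpha}+D\alpha$, so that $x-z=(\rho^2-D)\alpha$; this is precisely the value of $y$ at which $f(x,y,z)$ attains its minimum over $y$, by property 3. Since $x>z$ whenever $\alpha\in(0,1]$ (using $\rho^2>D$), property 3 gives
\[
f\bigl(\rho^2,\,(\rho^2-D)\alpha,\,\rho^2\bar{\alpha}+D\alpha\bigr) \;=\; \frac{1}{2}\log\frac{\rho^2}{\rho^2\bar{\alpha}+D\alpha},
\]
which is the equality asserted in \eqref{eq:lem1}.

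Next I would prove the strict inequality $R\alpha > \tfrac{1}{2}\log\tfrac{\rho^2}{\rho^2\bar{\alpha}+D\alpha}$. Since $R > R^*(D)=\tfrac{1}{2}\log\tfrac{\rho^2}{D}$, it suffices to show
\[
\frac{\alpha}{2}\log\frac{\rho^2}{D} \;\geq\; \frac{1}{2}\log\frac{\rho^2}{\rho^2\bar{\alpha}+D\alpha},
\]
with strict inequality at $\alpha=1$ inherited from $R>R^*(D)$. Exponentiating, this reduces to the weighted AM–GM inequality $\bar{\alpha}\rho^2+\alpha D \geq (\rho^2)^{\bar{\alpha}} D^{\alpha}$, which holds with equality iff $\rho^2=D$. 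Since $\rho^2>D$, it is strict for every $\alpha\in(0,1)$, and the case $\alpha=1$ follows directly from $R>\tfrac{1}{2}\log\tfrac{\rho^2}{D}$. Together these cover all $\alpha\in(0,1]$.

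Finally, the consequence $D_\alpha<\rho^2\bar{\alpha}+D\alpha$ is immediate: by definition, $D_\alpha$ is the unique value making $f(\rho^2,(\rho^2-D)\alpha,\cdot)=R\alpha$, and by property 1 this function is strictly decreasing in its third argument on $(0,\rho^2+(\rho^2-D)\alpha)$. Since we have just shown that $f$ evaluated at the third argument $\rho^2\bar{\alpha}+D\alpha$ is strictly less than $R\alpha$, the value $D_\alpha$ that raises $f$ up to $R\alpha$ must be strictly smaller than $\rho^2\bar{\alpha}+D\alpha$. There is no real obstacle here; the only thing to verify carefully is that $\rho^2\bar{\alpha}+D\alpha$ lies in the regime where property 3 applies (i.e.\ that $z\leq x+y$), which is clear because $\rho^2\bar{\alpha}+D\alpha\leq\rho^2\leq\rho^2+(\rho^2-D)\alpha$.
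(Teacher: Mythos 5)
Your proof is correct, and the underlying idea is the same as the paper's -- the concavity of the logarithm -- but the packaging is genuinely more elementary. The paper introduces the function $g(\alpha) = R\alpha - \tfrac{1}{2}\log\tfrac{\rho^2}{\rho^2\bar{\alpha}+D\alpha}$, verifies $g(0)=0$ and $g(1)=R-\tfrac{1}{2}\log\tfrac{\rho^2}{D}>0$, computes $g''(\alpha) = -(1-D/\rho^2)^2/(1-(1-D/\rho^2)\alpha)^2<0$, and invokes strict concavity to conclude $g(\alpha)\geq\alpha\,g(1)>0$ on $(0,1]$. You instead split $R\alpha$ as $\alpha\cdot\tfrac{1}{2}\log\tfrac{\rho^2}{D} + \alpha(R-\tfrac{1}{2}\log\tfrac{\rho^2}{D})$; the first piece dominates $\tfrac{1}{2}\log\tfrac{\rho^2}{\rho^2\bar{\alpha}+D\alpha}$ by weighted AM--GM, and the second piece is strictly positive. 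This is exactly the chord decomposition $g(\alpha)=\alpha g(1) + [g(\alpha)-\alpha g(1)]$ that the paper's concavity argument delivers, but you obtain the nonnegativity of the bracket by a one-line appeal to a named inequality rather than by computing a second derivative. What you buy is brevity and no calculus; what the paper's version buys is nothing extra here, though the explicit $g''$ calculation aligns stylistically with the quantitative Taylor-expansion bounds used elsewhere in Section VI. One small presentational nit: you attribute the strictness partly to AM--GM being strict on $(0,1)$ and partly to $R>R^*(D)$ at $\alpha=1$, but in fact $\alpha(R - \tfrac{1}{2}\log\tfrac{\rho^2}{D})>0$ already supplies strictness for all $\alpha\in(0,1]$, so the AM--GM step need only be weak throughout. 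Everything else, including the identity $f(x,x-z,z)=\tfrac{1}{2}\log\tfrac{x}{z}$, the check that $z\leq x+y$, and the monotonicity-of-$f$ argument for the consequence, matches the paper.
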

\begin{proof}
The last equality in \eqref{eq:lem1} holds because $f(x, x-z, z) = \tfrac{1}{2} \ln \frac{x}{z}$.  Define a function
 \[ g(\alpha) = R \alpha -   \frac{1}{2} \log \frac{\rho^2}{ \rho^2 \bar{\alpha} + D\alpha }. \]
Then $g(0)=0$, $g(1) = R - \frac{1}{2} \ln \frac{\rho^2}{D} >0$, and  the second derivative is
\[ \frac{d^2 g}{d \alpha^2} =  \frac{-(1 - \tfrac{D}{\rho^2})^2}{ (1-(1 - \tfrac{D}{\rho^2}) \alpha)^2 } < 0.\]
Therefore $g$ is strictly concave in $[0,1]$, and its minimum value (attained at $\alpha=0$) is $0$. This proves \eqref{eq:lem1}. Recalling the definition of $D_\alpha$ in \eqref{eq:Dalph}, \eqref{eq:lem1} implies that
\[ f(\rho^2, (\rho^2 - D)\alpha, D_{\alpha}) = R \alpha  >  f( \rho^2, \, (\rho^2 - D)\alpha, \, \rho^2 \bar{\alpha} + D\alpha )   \]
As $f$ is decreasing in its third argument (the distortion),   we conclude that
$D_\alpha < \rho^2 \bar{\alpha} + D\alpha$.
\end{proof}

We now bound each term on the RHS of \eqref{eq:pxa_ub1}.
Showing that the first term of \eqref{eq:pxa_ub1} is small implies that w.h.p any $\alpha L$ sections by themselves will leave a residual distortion of at least $D_{\alpha}$. Showing that the second  term is small implies that under this condition, the expected number of solutions sharing any common terms with $\beta$ is small compared to $\expec X$.

\textbf{ Bounding $\mc{F}^c_\alpha(\beta)$}: From the definition of the event $\mc{F}_\alpha(\beta)$, we have
\be
P(\mc{F}^c_\alpha(\beta)) = \cup_{\mc{K}} \  P(\abs{ \tbfs - \mbf{A} \beta_{\mc{K}} }^2 < D_{\alpha} \mid \beta \text{ is a solution} )
\label{eq:PFc0}
\ee
where the union is over all size-$\alpha L$ subsets of $\{1, \ldots, L\}$. Using a union bound, \eqref{eq:PFc0} becomes
\be
P(\mc{F}^c_\alpha(\beta)) \leq {L \choose {L\alpha} } \frac{P(\abs{ \tbfs - \mbf{A} \beta_{\mc{K}} }^2 < D_{\alpha}, \, \abs{ \tbfs - \mbf{A} \beta }^2 < D)}
{P(\abs{ \tbfs - \mbf{A} \beta}^2 < D)}
\label{eq:PFc1}
\ee
where $\mc{K}$ is a generic size-$\alpha L$ subset of $\{1, \ldots, L\}$, say $\mc{K} = \{1, \ldots, \alpha L\}$.
Recall from \eqref{eq:EX} that for sufficiently large $n$, the denominator in \eqref{eq:PFc1} can be bounded from below as 
\be P(\abs{ \tbfs - \mbf{A} \beta}^2 < D) \geq \frac{\kappa}{\sqrt{n}} e^{-nf(\rho^2, \rho^2 -D, D)} \label{eq:full_cwd} \ee
and $f(\rho^2, \rho^2 -D, D) = \tfrac{1}{2} \log \frac{\rho^2}{D}$.
The numerator in \eqref{eq:PFc1} can be expressed as
\be
\begin{split}
& P(\abs{ \tbfs - \mbf{A} \beta_{\mc{K}} }^2 < D_{\alpha}, \, \abs{ \tbfs - \mbf{A} \beta }^2 < D)  \\
&  = \int_{0}^{D_\alpha} \psi(y) \, P( \abs{ \tbfs - \mbf{A} \beta }^2 < D  \mid \abs{ \tbfs - \mbf{A} \beta_{\mc{K}} }^2 = y) \, dy
\end{split}
\label{eq:PFc_num0}
\ee
where $\psi$ is the density of the random variable $\abs{ \tbfs - \mbf{A} \beta_{\mc{K}} }^2$. Using the cdf at $y$ to  bound $\psi(y)$ in the RHS of \eqref{eq:PFc_num0}, we obtain the following upper bound for sufficiently large $n$.
\be
\begin{split}
& P(\abs{ \tbfs - \mbf{A} \beta_{\mc{K}} }^2 < D_{\alpha}, \, \abs{ \tbfs - \mbf{A} \beta }^2 < D) \\
& \leq  \int_{0}^{D_\alpha} P(\abs{ \tbfs - \mbf{A} \beta_{\mc{K}} }^2 <y)  \\
& \qquad \qquad   \cdot P( \abs{ \tbfs - \mbf{A} \beta }^2 < D \mid \abs{ \tbfs - \mbf{A} \beta_{\mc{K}} }^2 = y) \, dy \\
&  \stackrel{(a)}{\leq} \int_{0}^{D_\alpha} \frac{\kappa}{\sqrt{n}} e^{-n f(\rho^2, (\rho^2-D)\alpha, y)}  \\
& \quad \ \   \cdot P( \abs{ (\tbfs - \mbf{A} \beta_\mck) - \mbf{A} \beta_{\mckc}  }^2 < D \mid \abs{ \tbfs - \mbf{A} \beta_{\mc{K}} }^2 = y) \, dy \\
& \stackrel{(b)}{\leq} \int_{0}^{D_\alpha}  \frac{\kappa}{\sqrt{n}} e^{-n f(\rho^2, (\rho^2-D)\alpha, y)} \cdot e^{-n f(y, (\rho^2-D)\bar{\alpha}, D)}  \,dy \\
& \stackrel{(c)}{\leq} \int_{0}^{D_\alpha} \frac{\kappa}{\sqrt{n}}  e^{-n f(\rho^2, (\rho^2-D)\alpha, D_\alpha)} \cdot
 e^{-n f(D_\alpha, (\rho^2-D)\bar{\alpha}, D)}  \,dy.
\end{split}
\label{eq:PFc_num1}
\ee
In \eqref{eq:PFc_num1},  $(a)$  holds for sufficiently large $n$ and is obtained using the strong version of Cram{\'e}r's large deviation theorem: note that $\mbf{A} \beta_{\mck}$ is   a linear combination of $\alpha L$ columns of $\mbf{A}$, hence it is a Gaussian random vector with i.i.d. $\mc{N}(0, (\rho^2 -D)\alpha)$ entries that is independent of $\tbfs$.    Inequality $(b)$ is similarly obtained: $\mbf{A} \beta_{\mckc}$ has i.i.d.   $\mc{N}(0, (\rho^2 -D)\bar{\alpha})$ entries, and is independent of both $\tbfs$ and $\mbf{A} \beta_{\mck}$.   Finally,  $(c)$ holds because the overall exponent
\[ f(\rho^2, (\rho^2-D)\alpha, y) + f(y, (\rho^2-D)\bar{\alpha}, D)  \]
is a decreasing function of $y$, for $y \in (0, \rho^2 \bar{\alpha} + D \alpha ]$, and $ D_{\alpha} \leq  \rho^2 \bar{\alpha} + D \alpha$.

Using \eqref{eq:full_cwd} and \eqref{eq:PFc_num1} in \eqref{eq:PFc1}, for sufficiently large $n$ we have
\be
\begin{split}
& P(\mc{F}^c_\alpha(\beta)) \leq  \kappa  {L \choose {L\alpha} }   \\
& \quad   \times e^{-n[ f(\rho^2, (\rho^2 - D)\alpha, D_{\alpha}) + f(D_\alpha, (\rho^2-D)\bar{\alpha}, D) -  f(\rho^2, \rho^2 -D, D) ]}. \end{split}
\label{eq:PFc_final}
\ee

\textbf{Bounding $\expec[ X_\alpha(\beta) | \, \mc{F}_\alpha(\beta)]$}: 
There are ${L \choose L\alpha}(M-1)^{L \bar{\alpha}}$ codewords which share  $\alpha L$ common terms with $\beta$. Therefore
\be
\begin{split}
& \expec[ X_\alpha(\beta) \mid  \mc{F}_\alpha(\beta)] = {L \choose L\alpha}(M-1)^{L \bar{\alpha}}  \\
& \qquad  \qquad  \times P(\abs{\tbfs - \mbf{A} \beta' }^2 < D \mid \abs{\tbfs - \mbf{A} \beta }^2 < D, \, \mc{F}_\alpha(\beta) )
\end{split}
\label{eq:EXalph0}
\ee
where $\beta'$ is a  codeword that shares exactly $\alpha L$ common terms with $\beta$. If $\mck$ is the size-$\alpha L$ set of common sections between $\beta$ and $\beta'$, then $\beta'= \beta_{\mck} + \beta'_{\mckc}$ and
\be
\begin{split}
&  P(\abs{\tbfs - \mbf{A} \beta' }^2 < D \mid \abs{\tbfs - \mbf{A} \beta }^2 < D, \, \mc{F}_\alpha(\beta) )  \\
&= P(\abs{(\tbfs - \mbf{A} \beta_{\mck}) - \mbf{A} \beta'_{\mckc} }^2 < D \mid \abs{\tbfs - \mbf{A} \beta }^2 < D, \, \mc{F}_\alpha(\beta) ) \\
& \stackrel{(a)}{\leq} P\left( \tfrac{1}{n} \sum_{i=1}^n (D_\alpha - (\mbf{A} \beta'_{\mckc})_i)^2 \; < D \right) \\
& \stackrel{(b)}{\leq} \frac{\kappa}{\sqrt{n}}  e^{-n f(D_\alpha, (\rho^2-D)\bar{\alpha}, D)},
\end{split}
\label{eq:bbpr}
\ee
where $(b)$ holds for sufficiently large $n$. In \eqref{eq:bbpr}, $(a)$  is obtained as follows.    Under the event $\mc{F}_\alpha(\beta)$, the norm $\abs{\tbfs - \mbf{A} \beta_{\mck}}^2$ is at least $D_\alpha$, and $\mbf{A} \beta'_{\mckc}$ is an i.i.d. $\mc{N}(0, (\rho^2 -D) \bar{\alpha})$ vector independent of $\tbfs$, $\beta$, and $\beta_{\mck}$. $(a)$ then follows from the rotational invariance of the distribution of $\mbf{A} \beta'_{\mckc}$. Inequality $(b)$ is obtained using the strong version of Cram{\'e}r's large deviation theorem.

Using \eqref{eq:bbpr} in \eqref{eq:EXalph0}, we obtain for sufficiently large $n$
\be
\begin{split}
 & \expec[ X_\alpha(\beta) \mid  \mc{F}_\alpha(\beta)]  \\
 &  \leq {L \choose L\alpha}(M-1)^{L \bar{\alpha}} \frac{\kappa}{\sqrt{n}} e^{-n f(D_\alpha, (\rho^2-D)\bar{\alpha}, D)}  \\
 &  \leq {L \choose L\alpha} \frac{\kappa}{\sqrt{n}} e^{n (R\bar{\alpha} - f(D_\alpha, (\rho^2-D)\bar{\alpha}, D))}.
\end{split}
\label{eq:EXalph_bound}
\ee

\textbf{Overall bound}: Substituting the bounds from \eqref{eq:PFc_final}, \eqref{eq:EXalph_bound} and \eqref{eq:EX} in \eqref{eq:pxa_ub1}, for sufficiently large $n$ we have  for $\frac{1}{L} \leq \alpha \leq 1$:
\be
\begin{split}
& P\left( X_\alpha(\beta)  >  L^{-3/2} \expec X \right)  \leq \kappa {L \choose L\alpha}  \\
&  \times \Big( e^{-n[ f(\rho^2, (\rho^2 - D)\alpha, D_{\alpha}) + f(D_\alpha, (\rho^2-D)\bar{\alpha}, D) -  f(\rho^2, \rho^2 -D, D) ]} \\
&+   L^{3/2}  e^{-n [R\alpha + f(D_\alpha, (\rho^2-D)\bar{\alpha}, D) - f(\rho^2, (\rho^2-D), D)]}  \Big).
\end{split}
\label{eq:pxa_ub2}
\ee

Since $D_\alpha$  is chosen to satisfy $R\alpha = f(D_\alpha, (\rho^2-D)\bar{\alpha}, D) $, the two exponents in \eqref{eq:pxa_ub2} are equal. To bound \eqref{eq:pxa_ub2}, we use the following lemma. 
\begin{lem}
For $\alpha \in \{ \frac{1}{L}, \ldots, \frac{L-1}{L} \}$, we have
\be
\begin{split}
& \Big[ f(\rho^2, (\rho^2 - D)\alpha, D_{\alpha}) + f(D_\alpha, (\rho^2-D)\bar{\alpha}, D) \\
& \quad  - f(\rho^2, (\rho^2-D), D) \Big ]  \\
& \qquad  >
\left\{ \begin{array}{l l}  \alpha \Lambda(\alpha) &  \text{ if } D_\alpha > D \\ 
c_1 &\ \text{ if } D_\alpha \leq D. \end{array} \right.
\end{split}
\label{eq:gap_claim}
\ee
where $D_\alpha$  is  the solution of \eqref{eq:Dalph}, $c_1$ is a positive constant given by \eqref{eq:c1_def}, and 
\be
\begin{split}
& \Lambda(\alpha) = \frac{1}{8} \left(\frac{D}{\rho^2}\right)^4 \left(1 + \frac{D}{\rho^2}\right)^2\left(1 -\frac{D}{\rho^2} \right)  \\
&\cdot  \left[ -1 + \left( 1 +  \frac{2 \sqrt{\rho^2/D}}{(\frac{\rho^2}{D} -1)}  \left(R -\frac{1}{2\alpha} \log \frac{\rho^2}{\rho^2 \bar{\alpha} + D \alpha}\right) \right)^{\frac{1}{2}} \right]^2.
\end{split}
\label{eq:c0def}
\ee
\label{lem:gap_lem}
\end{lem}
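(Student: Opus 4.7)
The plan is to recognize the quantity on the left-hand side of \eqref{eq:gap_claim}, call it $G(\alpha)$, as the gap between a ``two-stage'' and a ``one-stage'' large-deviation exponent for the same underlying event, and then to quantify it via local convexity of the two-stage exponent. Set $\tilde D := \rho^2\bar\alpha + D\alpha$ and
\begin{equation*}
H(t) := f(\rho^2,\,(\rho^2-D)\alpha,\,t) + f(t,\,(\rho^2-D)\bar\alpha,\,D).
\end{equation*}
Since $\rho^2 - \tilde D = (\rho^2-D)\alpha$ and $\tilde D - D = (\rho^2-D)\bar\alpha$, each $f$-term in $H(\tilde D)$ sits at the boundary $y = x-z$, so the identity $f(x,x-z,z) = \tfrac12\log(x/z)$ yields $H(\tilde D) = \tfrac12\log(\rho^2/D) = f(\rho^2,\rho^2-D,D)$. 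Consequently $G(\alpha) = H(D_\alpha) - H(\tilde D)$, and by Lemma \ref{lem:dalph_ub0} we have $D_\alpha < \tilde D$.

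I would next verify that $\tilde D$ is the unique global minimizer of $H$. Direct differentiation of \eqref{eq:fdef}--\eqref{eq:Adef} gives the clean identities $\partial f/\partial z = -1/(2z)$ and $\partial f/\partial x = 1/(2x)$ when $y = x-z$ (using $A = 2z$ at this boundary). At $t = \tilde D$ both terms of $H$ sit on this boundary, so
\begin{equation*}
H'(\tilde D) \;=\; -\tfrac{1}{2\tilde D} + \tfrac{1}{2\tilde D} \;=\; 0.
\end{equation*}
Convexity of $f$ in each of its first and third arguments, checked by direct second-derivative computation from \eqref{eq:fdef}, then makes $H$ convex, so $\tilde D$ is the global minimum.

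For the main case $D_\alpha > D$, I would Taylor-expand $H$ about $\tilde D$ to obtain
\begin{equation*}
G(\alpha) \;\geq\; \tfrac12\, H''(\xi)\,(\tilde D - D_\alpha)^2
\end{equation*}
for some $\xi \in [D_\alpha,\tilde D]$; computing $H''$ from \eqref{eq:fdef} and bounding it below on $[D,\rho^2]$ produces, once all constants are tracked, the prefactor $\tfrac18(D/\rho^2)^4(1+D/\rho^2)^2(1-D/\rho^2)$ appearing in \eqref{eq:c0def}. To lower bound $\tilde D - D_\alpha$, I would use the defining equation \eqref{eq:Dalph} to write
\begin{equation*}
\alpha\Bigl(R - \tfrac{1}{2\alpha}\log\tfrac{\rho^2}{\tilde D}\Bigr) \;=\; f\bigl(\rho^2,(\rho^2-D)\alpha,D_\alpha\bigr) - f\bigl(\rho^2,(\rho^2-D)\alpha,\tilde D\bigr),
\end{equation*}
and Taylor-expand the right side in $z$ about $\tilde D$ to second order; the resulting quadratic inequality in $(\tilde D - D_\alpha)$ has solution of the form $-1 + \sqrt{1 + c(\rho^2,D)\,\alpha\Delta(\alpha)}$, which is exactly the bracket in \eqref{eq:c0def}. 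Combining these two estimates yields $G(\alpha)\geq\alpha\Lambda(\alpha)$. For the second case $D_\alpha \leq D$, convexity of $H$ together with $D_\alpha \leq D < \tilde D$ gives the cruder bound $G(\alpha)\geq H(D)-H(\tilde D) = f(\rho^2,(\rho^2-D)\alpha,D) - \tfrac12\log(\rho^2/D)$; this is a strictly positive continuous function of $\alpha\in(0,1)$ by strict convexity of $f$ in $y$ with minimum at $y=\rho^2-D$, and since $D_\alpha\leq D$ forces $\alpha$ to lie in a set bounded away from $0$, taking $c_1$ to be the infimum over this range provides the claimed constant.

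The main obstacle is the second-order bookkeeping: to recover the exact $\Lambda(\alpha)$ given in \eqref{eq:c0def}---in particular the coefficient $\tfrac{2\sqrt{\rho^2/D}}{\rho^2/D-1}$ inside the square root and the overall prefactor---one must carry through the Taylor expansion of $f$ in $z$ at $z=\tilde D$ without approximation, and verify that both the lower bound on $H''$ and the quadratic-remainder bound are uniform over $\alpha\in\{1/L,\ldots,(L-1)/L\}$. Using the full interval $[D,\tilde D]$ (rather than a neighborhood of $\tilde D$ that shrinks with $\alpha$) is what makes the constants $\alpha$-uniform and forces the particular form of $\Lambda(\alpha)$.
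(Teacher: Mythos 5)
Your Case~1 ($D_\alpha > D$) is essentially the paper's argument in different notation: the paper works with $g_\alpha(u) = H(u) - \tfrac12\log(\rho^2/D)$ in place of your $H$, Taylor-expands $g_\alpha$ about $\tilde D = \rho^2\bar\alpha + D\alpha$ (where $g_\alpha = g_\alpha' = 0$), lower-bounds $g_\alpha''(\xi)$ by monotonicity, and lower-bounds $\delta = \tilde D - D_\alpha$ by expanding the defining equation \eqref{eq:Dalph} in $z$ to second order and solving the resulting quadratic. Your outline reproduces all of these steps.

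Case~2 ($D_\alpha \le D$), however, has a genuine gap. Your bound $G(\alpha) \ge H(D) - H(\tilde D)$ is correct by convexity, but it cannot yield a strictly positive constant $c_1$: as $\alpha \to 1$ we have $\tilde D = \rho^2\bar\alpha + D\alpha \to D$, so $H(D) - H(\tilde D) \to 0$ by continuity. The condition $D_\alpha \le D$ selects $\alpha$ in an interval of the form $[\alpha^*, 1)$ with $\alpha^* \in (\tfrac12,1)$, which is bounded away from $0$ as you say, but \emph{not} bounded away from $1$ (the lemma must cover $\alpha$ up to $\tfrac{L-1}{L}$, which tends to $1$). The infimum of your bound over that set is therefore $0$, not a positive constant. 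Moreover, the intermediate equality $H(D) - H(\tilde D) = f(\rho^2,(\rho^2-D)\alpha,D) - \tfrac12\log(\rho^2/D)$ you assert drops the term $f(D,(\rho^2-D)\bar\alpha,D)$, which is generally nonzero; this only tilts the inequality in a favorable direction but does not rescue the vanishing infimum. The paper avoids the problem by using the \emph{exact} identity $g_\alpha(D_\alpha) = R\alpha - \tfrac12\log(\rho^2/D)$ for $D_\alpha\le D$ (via the defining equation and $f(D_\alpha,(\rho^2-D)\bar\alpha,D)=0$), which is \emph{increasing} in $\alpha$ and therefore minimized at the threshold $\alpha^*$ solving $R\alpha = f(\rho^2,(\rho^2-D)\alpha,D)$; the explicit $c_1$ in \eqref{eq:c1_def} then comes from a Taylor expansion of $f$ in its \emph{second} (variance) argument about $y=\rho^2-D$, solving a quadratic in $\bar\alpha$, together with the bound $\alpha^*\in(\tfrac12,1)$ to control the second-derivative factor. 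To repair your argument you would need to replace the loose convexity bound with this exact evaluation at the boundary $\alpha^*$.
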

\begin{IEEEproof} See Appendix \ref{app:gap_lem_proof}. \end{IEEEproof} 

We observe that $\Lambda(\alpha)$ is strictly decreasing for $\alpha \in (0,1]$. This can be seen by using the Taylor expansion of $\log(1 - x)$ for $0<x <1$  to write
\be
\begin{split}
  R -\frac{1}{2\alpha} \ln \frac{\rho^2}{\rho^2 \bar{\alpha} + D \alpha}  &  =  R + \frac{1}{2 \alpha} \log \left(1 - \alpha \left(1-\frac{D}{\rho^2} \right)\right)  \\
&   = R - \frac{1}{2} \sum_{k=1}^\infty \left( 1 - \frac{D}{\rho^2} \right)^k \frac{\alpha^{k-1}}{k}.
\end{split}
\label{eq:Rminus_taylor}
\ee
Since
\[ R > \frac{1}{2} \log \frac{\rho^2}{D} > \frac{1}{2}\left(1 -\frac{D}{\rho^2}\right), \]
\eqref{eq:Rminus_taylor} shows that $\Lambda(\alpha)$ is  strictly positive and strictly decreasing in $\alpha \in (0,1)$ with 
\be
\begin{split}
&\Lambda(0):= \lim_{\alpha \to 0} \Lambda(\alpha)  = \frac{1}{8}  \left(\frac{D}{\rho^2}\right)^4 \left(1 + \frac{D}{\rho^2}\right)^2\left(1 -\frac{D}{\rho^2} \right)  \\
& \qquad  \left[ -1 + \Bigg( 1 +  \frac{2 \sqrt{\rho^2/D}}{(\frac{\rho^2}{D} -1)}  \left(R -\frac{1}{2}\left(1-\frac{D}{\rho^2} \right) \right) \Bigg)^{\frac{1}{2}} \right]^2, \\ 
& \Lambda(1)   = \frac{1}{8} \left(\frac{D}{\rho^2}\right)^4 \left(1 + \frac{D}{\rho^2}\right)^2\left(1 -\frac{D}{\rho^2} \right)  \\
& \qquad \left[ -1 + \left( 1 +  \frac{2 \sqrt{\rho^2/D}}{(\frac{\rho^2}{D} -1)}  \left(R -\frac{1}{2}\log \frac{\rho^2}{D} \right) \right)^{\frac{1}{2}} \right]^2.
\end{split}
\label{eq:Lambda01}
\ee

Substituting \eqref{eq:gap_claim} in \eqref{eq:pxa_ub2}, we have, for $\alpha \in \left\{\frac{1}{L}, \ldots, \frac{L}{L} \right\}$:
\be
\begin{split}
& P\left( X_\alpha(\beta)  >  L^{-3/2} \expec X \right) \\
& <  \kappa  {L \choose L\alpha} {L^{3/2}} \exp(-n \cdot \min\{ \alpha \Lambda(\alpha), c_1 \} ).
\end{split}
\ee
Taking logarithms and dividing both sides by $L \log L$, we obtain
\be
\begin{split}
& \frac{1}{L \log L} \log P\left( X_\alpha(\beta)  >  L^{-3/2} \expec X \right)   \\
& <  \frac{\log { \kappa}} {L \log L} + \frac{ \log {L \choose L\alpha}}{ L \log L}  + \frac{3}{2L } - \frac{n (\min\{ \alpha \Lambda(\alpha), c_1 \})}{L \log L}\\
& \stackrel{(a)}{=} \frac{\log { \kappa}} {L \log L}  + \min \left\{ \alpha, \bar{\alpha}, \frac{\log 2} {\log L} \right\} +  \frac{3}{2L} \\
& \qquad - \frac{(\min\{ \alpha \Lambda(\alpha), c_1 \}) b}{R}
\end{split}
\label{eq:div_by_llogl}
\ee
where to obtain $(a)$, we have used the bound
\[ \log {L \choose L\alpha} <   \min \ \{ \alpha L \log L, \  (1-\alpha) L \log L, \ L \log 2 \} \]
and the relation \eqref{eq:rel_nL}. For the right side of \eqref{eq:div_by_llogl} to be negative for sufficiently large $L$, we need
\be
\frac{(\min\{ \alpha \Lambda(\alpha), c_1 \}) b}{R} >  \min \left\{ \alpha, \bar{\alpha}, \frac{\log 2} {\log L} \right\} +  \frac{3}{2 L} .
\label{eq:alph_bcond}
\ee
This can be arranged by choosing $b$ to be large enough. Since \eqref{eq:alph_bcond} has to be satisfied for all $\alpha \in \{ \tfrac{1}{L}, \ldots, \tfrac{L-1}{L}\}$, we need
\be
\begin{split}
& b  > \\
&   \max_{\alpha \in \{\frac{1}{L}, \ldots, \frac{L}{L} \}} \hspace{-2pt}\left\{ \frac{R}{(\min\{ \alpha  \Lambda(\alpha), c_1 \})}  \Big[ \min 
\Big\{ \alpha, \bar{\alpha}, \frac{\log 2} {\log L} 
\Big\}  +  \frac{3 }{ 2L} \Big]  \right\} \\
&  \stackrel{(a)}{=} \frac{ 2.5 R}{ \Lambda(0)} +   O\left( \frac{1}{L} \right) \\
& = b_{min}\left( \frac{\rho^2}{D} \right) +  O\left( \frac{1}{L}\right).
\end{split}
\label{eq:bmin_cond0}
\ee
In \eqref{eq:bmin_cond0}, $(a)$ holds because $\Lambda(\alpha)$ is of constant order for all $\alpha \in (0,1]$, hence the maximum is attained at $\alpha = \tfrac{1}{L}$. The constant $\Lambda(0)$ is given by \eqref{eq:Lambda01}, and $b_{min}(.)$ is defined in the statement of Theorem
\ref{thm:ml_result}.

When $b$ satisfies \eqref{eq:bmin_cond0} and $L$ is sufficiently large, for $\alpha \in \{ \tfrac{1}{L}, \ldots, \tfrac{L}{L} \}$,  the bound in \eqref{eq:div_by_llogl} becomes
\be
\begin{split}
& \frac{1}{L \log L} \log  P\left( X_\alpha(\beta) >  L^{-3/2} \expec X \right)  \\
 & < \frac{\log {\kappa }} {L \log L}   -   \frac{\min\{ \alpha \Lambda(\alpha), c_1 \} (b - b_{min} - O(\frac{1}{L}))}{R}  \\
&  \leq    \frac{\log {\kappa}} {L \log L}   - \frac{\Lambda(0)}{L} \frac{ (b- b_{min})}{R}  =  \frac{\log {\kappa }} {L \log L}   - \frac{2.5 (\frac{b}{b_{min} }- 1)}{L}.
\end{split}
\label{eq:pxalph0}
\ee
Therefore
\be  P\left( X_\alpha(\beta) >   L^{-3/2} \expec X \right) <  \, \kappa L^{ -2.5 (\frac{b}{b_{min}} - 1)}. \label{eq:pxalph1} \ee
This completes the proof of Lemma \ref{lem:good}.

\appendices

\section{Proof of Lemma \ref{lem:gap_lem}} \label{app:gap_lem_proof}

For $\alpha \in \{\tfrac{1}{L}, \ldots, \tfrac{L-1}{L} \}$, define the function $g_\alpha: \mathbb{R} \to \mathbb{R}$ as
\be
g_\alpha(u) = f(\rho^2, (\rho^2-D)\alpha, u) + f(u,  (\rho^2-D)\bar{\alpha}, D)
 - \frac{1}{2} \ln \frac{\rho^2}{D}. \ee
 We want a lower bound for $g_\alpha(D_\alpha)  \geq \Lambda(\alpha) \alpha$, where $D_\alpha$ is the solution to
 \be R \alpha = f(\rho^2, (\rho^2-D)\alpha, D_\alpha). \label{eq:Dalph_def} \ee
We consider the cases $D_\alpha > D$ and $D_\alpha \leq D$ separately.  Recall from Lemma \ref{lem:dalph_ub0} that $D_\alpha < \rho^2 \bar{\alpha} + D{\alpha}$. 
 
 \subsubsection*{Case $1$:  $D< D_\alpha < \rdabara$.} 
 
 In this case, both the $f(.)$ terms in the definition of $g_{\alpha}(D_\alpha)$ are  strictly positive. We can write
 \be D_\alpha =  \rdabara - \delta, \label{eq:delta_def0}\ee
where $\delta \in (0, (\rho^2-D)\bar{\alpha})$. Expanding $g_\alpha(\rdabara - \delta)$  around $\rdabara$ using Taylor's theorem, we obtain
\be
g(D_\alpha) = g(\rdabara) - g^\prime(\rdabara) \delta + g^{\prime \prime}(\xi) \frac{\delta^2}{2} =  g^{\prime \prime}(\xi) \frac{\delta^2}{2}
\label{eq:galph_taylor}
\ee 
since $g(\rdabara) = g^\prime(\rdabara) =0$. Here $\xi$ is a number in the interval $(D, \rdabara)$. We bound $g(D_\alpha)$ from below  by obtaining separate lower bounds for 
$g^{\prime \prime}(\xi)$ and $\delta$.

\emph{Lower Bound for $g^{\prime \prime}(\xi)$}: Using the definition of $f$ in \eqref{eq:fdef}, the second derivative of $g(u)$ is 
\be
\begin{split}
& g^{\prime \prime}(u)  =  \frac{-1}{2 u^2}  \\
& + \frac{2 \rho^4 \, \cdot [ (\rho^2-D)^2 \alpha^2 + 4\rho^2 u]^{-1/2}}{ \alpha (\rho^2-D) \left[\sqrt{(\rho^2-D)^2 \alpha^2 + 4\rho^2 u} - \rda\right]^2}  \\
& + \frac{2D^2 \, \cdot  [(\rho^2-D)^2 \bar{\alpha}^2 + 4D u]^{-1/2} }{\bar{\alpha} (\rho^2-D) \left[\sqrt{(\rho^2-D)^2 \bar{\alpha}^2 + 4D u} - \rdabar \right]^2}.
\end{split}
\ee
It can be verified that $g^{\prime \prime}(u)$ is a decreasing function, and hence for $\xi \in (D, \rdabara)$,
\be
\begin{split}
&g^{\prime \prime}(\xi) \geq g^{\prime \prime}(\rdabara)\\
 &=  \frac{-1}{2(\rdabara)^2} \\
 & \quad  + \frac{\rho^4}{2 \alpha (\rho^2-D) (\rdabara)^2 (\rho^2(1+\bar{\alpha})+D \alpha)} \\
&\quad  + \frac{1}{2 \bar{\alpha} (\rho^2-D)(\rho^2 \bar{\alpha} + D(1+\alpha))}\\
 & =  \frac{(\rho^2 +D)}{2\alpha \bar{\alpha} (\rho^2-D) (\rho^2(1+\bar{\alpha})+D \alpha)(\rho^2 \bar{\alpha} + D(1+\alpha))} \\
 & \geq 
 \frac{1}{4 \alpha \bar{\alpha} (\rho^2-D) \rho^2}.
\end{split}
\label{eq:gxi_bnd}
\ee

\emph{Lower bound for $\delta$}: From \eqref{eq:Dalph_def} and \eqref{eq:delta_def0},  note that $\delta$ is the solution to 
\be
R \alpha = f(\rho^2, (\rho^2-D)\alpha, \rdabara - \delta).
\ee
Using Taylor's theorem for  $f$ in its third argument around the point $p :=(\rho^2, (\rho^2-D)\alpha, \rdabara)$, we have
\be
\begin{split}
& R \alpha  = f(p) - \frac{\partial f}{\partial z} \Big{\lvert}_p  \delta + \frac{\partial^2  f}{\partial z^2} \Big{\lvert}_{\tilde{p}}  \, \frac{\delta^2}{2}  \\
& = \frac{1}{2} \ln \frac{\rho^2}{\rho^2 \bar{\alpha} + D \alpha} + \frac{1}{2( \rdabara)} \, \delta +   \frac{1}{2} \frac{\partial^2  f}{\partial z^2} \Big{\lvert}_{\tilde{p}}  \delta^2,
\end{split}
\label{eq:delta_quad}
\ee
where $\tilde{p}=p=(\rho^2, (\rho^2-D)\alpha, \tilde{z})$ for some $\tilde{z} \in (D, \rdabara)$.  As $\eqref{eq:delta_quad}$ is a quadratic in $\delta$ with positive coefficients for the $\delta$ and $\delta^2$ terms, replacing the $\delta^2$ coefficient with an upper bound and solving the resulting quadratic will yield a \emph{lower bound} for $\delta$. Since the function
\be
\frac{\partial^2 f}{\partial z^2}\Big{\lvert}_{(x,y,z)} = \frac{ 2x^2}{y \sqrt{y^2 + 4xz}(  \sqrt{y^2 + 4xz} - y)^2}
\ee
is decreasing in $z$, the $\delta^2$ coefficient can be bounded as follows.
\be
\frac{1}{2} \frac{\partial^2 f}{\partial z^2}\Big{\lvert}_{\tilde{p} =(\rho^2, (\rho^2-D)\alpha, \tilde{z})} 
   \leq a^*   := \frac{1}{2} \frac{\partial^2 f}{\partial z^2}\Big{\lvert}_{(\rho^2, (\rho^2-D)\alpha, D)}, 
  \ee
  where $a^*$  can be computed to be
  \be
  \begin{split} 
& a^* = \rho^4 \Bigg( \alpha (\rho^2-D) \sqrt{(\rho^2-D)^2 \alpha^2 + 4\rho^2D} \\
& \qquad \quad  \left[\sqrt{(\rho^2-D)^2 \alpha^2 + 4\rho^2D} - \rda\right]^2 \Bigg)^{-1}.
\end{split}
\label{eq:astar_def}
\ee 
Therefore we can obtain a lower bound for $\delta$, denoted by $\underline{\delta}$,  by solving the equation
\be
\underline{\delta}^2 \, a^*  + \underline{\delta} \, \frac{1}{2( \rdabara)} - \left( R \alpha - \frac{1}{2} \ln \frac{\rho^2}{\rho^2 \bar{\alpha} + D \alpha} \right) = 0.
\ee
We thus obtain
\be
\begin{split}
& \delta > \un{\delta} = \frac{1}{4(\rdabara) a^*} \Bigg[ -1 \,  +  \\
& \ \left( 1 + 16(\rdabara)^2a^* \alpha \left(R -\frac{1}{2\alpha} \ln \frac{\rho^2}{\rho^2 \bar{\alpha} + D \alpha} \right)\right)^{1/2}\Bigg].
\end{split}
\label{eq:delta_lb0}
\ee
We now show that $\un{\delta}$ can be bounded from below by $\alpha \Lambda(\alpha)$ by obtaining lower and upper  bounds for $a^* \alpha$. From  \eqref{eq:astar_def} we have 
\be
\begin{split}
& a^* \alpha    = \frac{\rho^4  \cdot [(\rho^2-D)^2 \alpha^2 + 4\rho^2D]^{-1/2}} {(\rho^2-D) \left[\sqrt{(\rho^2-D)^2 \alpha^2 + 4\rho^2D} - \rda\right]^2} \\ 
& \geq  \frac{\sqrt{\rho^2/D}}{8D(\rho^2-D) },
\end{split}
\label{eq:intbnd2}
\ee
where the inequality is obtained by noting that $a ^* \alpha$ is strictly increasing  in $\alpha$, and hence taking $\alpha=0$ gives a lower bound. Analogously, taking $\alpha =1$ yields the upper bound
\be
a^* \alpha \leq \frac{\rho^4}{4 D^2 (\rho^4 - D^2)}. 
\label{eq:intbnd3}
\ee
Using the bounds of \eqref{eq:intbnd2} and \eqref{eq:intbnd3} in \eqref{eq:delta_lb0}, we obtain
\be
\begin{split}
 & \delta > \un{\delta}   \geq \alpha \,  \frac{D^2 ( \rho^4 - D^2)}{\rho^6}  \Bigg[ -1 \  + \\
 & \quad  \left( 1 +  \frac{2 \sqrt{\rho^2/D}}{(\frac{\rho^2}{D} -1)}
 \left(R -\frac{1}{2\alpha} \ln \frac{\rho^2}{\rho^2 \bar{\alpha} + D \alpha} \right)\right)^{1/2}\Bigg].
\end{split}
\label{eq:del_bd_final}
\ee
Finally, using the lower bounds for $g^{\prime\prime}(\xi)$  and  $\delta$ from \eqref{eq:del_bd_final} and \eqref{eq:gxi_bnd} in  \eqref{eq:galph_taylor}, we obtain
\be
\begin{split}
& g(D_\alpha) >  \frac{\alpha}{8} \left(\frac{D}{\rho^2}\right)^4 \left(1 + \frac{D}{\rho^2}\right)^2\left(1 -\frac{D}{\rho^2} \right) \\
& \ \times \left[ -1 + \left( 1 +  \frac{2 \sqrt{\rho^2/D}}{(\frac{\rho^2}{D} -1)} \left(R -\frac{1}{2\alpha} \ln \frac{\rho^2}{\rho^2 \bar{\alpha} + D \alpha} \right)\right)^{\frac{1}{2}}\right]^2 
\\
& = \alpha \Lambda(\alpha).
\end{split}
\ee

 \subsubsection*{Case $2$:  $ D_\alpha \leq D$.}
 In this case, $g(D_\alpha)$ is  given by
 \be
 \begin{split}
&  g(D_\alpha)  = f(\rho^2, (\rho^2-D)\alpha, D_\alpha) + f(D_\alpha,  (\rho^2-D)\bar{\alpha}, D)  \\
& \qquad \qquad - \frac{1}{2} \ln \frac{\rho^2}{D} \\
& =  R\alpha  - \frac{1}{2} \ln \frac{\rho^2}{D},
 \label{eq:g_toprove}
 \end{split}
 \ee
where we have used \eqref{eq:Dalph_def} and the fact that $f(D_\alpha,  (\rho^2-D)\bar{\alpha}, D)=0$ for $D_\alpha \leq D$.
The right hand side of the equation
\[ R \alpha = f (\rho^2, (\rho^2-D)\alpha, z) \]
is decreasing in $z$ for $z \in (0,D]$. Therefore, it is sufficient to consider $D_\alpha =D$ in order to obtain a  lower bound for $R\alpha$ that holds for all $D_\alpha \in (0,D]$.
 
 Next, we claim that the $\alpha$ that solves the equation
 \be
 R \alpha = f (\rho^2, (\rho^2-D)\alpha, D)
 \label{eq:Ralph}
 \ee
lies in the interval $(\tfrac{1}{2},1)$. Indeed, observe that the LHS of \eqref{eq:Ralph} is increasing in $\alpha$, while the RHS is decreasing in $\alpha$ for $\alpha \in (0,1]$. Since the LHS is strictly greater than the RHS at $\alpha=1$ ($R > \frac{1}{2} \ln \frac{\rho^2}{D}$),  the solution is strictly less than $1$. On the other hand, for $\alpha \leq \tfrac{1}{2}$, we have
\be
\begin{split}
R \alpha \leq \frac{R}{2} \leq  \frac{1}{2}\left(1 - \frac{D}{\rho^2} \right)  < \frac{1}{2} \ln \frac{\rho^2}{D} & = f(\rho^2, (\rho^2-D), D) \\
& <  f(\rho^2, \tfrac{(\rho^2-D)}{2}, D),
\end{split}
\ee
i.e., the LHS of \eqref{eq:Ralph} is strictly less than the RHS. Therefore the $\alpha$ that solves \eqref{eq:Ralph} lies in  $(\tfrac{1}{2},1)$. 

To obtain a  lower bound on the RHS of \eqref{eq:g_toprove}, we expand $ f(\rho^2, (\rho^2-D)\alpha, D)$ using Taylor's theorem for the second argument.
\be
\begin{split}
& f (\rho^2, (\rho^2-D)\alpha, D)  =  f (\rho^2, (\rho^2-D) - \Delta, D) \\
 & = \frac{1}{2} \ln \frac{\rho^2}{D} - \Delta \frac{\partial f }{\partial y} \Big{\lvert}_{(\rho^2, (\rho^2-D), D)}  +  
 \frac{\Delta^2}{2} \frac{\partial^2 f }{\partial y^2}\Big{\lvert}_{(\rho^2, y_0, D)} \\
 &  =  \frac{1}{2} \ln \frac{\rho^2}{D}   +  
 \frac{\Delta^2}{2} \frac{\partial^2 f }{\partial y^2}\Big{\lvert}_{(\rho^2, y_0, D)},
 \end{split}
 \label{eq:ftay0}
\ee
where  $\Delta = \rdabar$,  and $y_0$ lies in the interval  $(\tfrac{1}{2} (\rho^2 -D), (\rho^2-D))$. Using \eqref{eq:ftay0} and the shorthand 
\[  f''(y_0) : =  \frac{\Delta^2}{2} \frac{\partial^2 f }{\partial y^2}\Big{\lvert}_{(\rho^2, y_0, D)}, \] 
\eqref{eq:Ralph}  can be written as 
\be
R\alpha - \frac{1}{2} \ln \frac{\rho^2}{D} =  \bar{\alpha}^2 \, \frac{(\rho^2-D)^2}{2}f''(y_0),
\label{eq:rabar_bnd1}
\ee
or
\be
R - \frac{1}{2} \ln \frac{\rho^2}{D} = R \bar{\alpha} \ +  \ \bar{\alpha}^2 \, \frac{(\rho^2-D)^2}{2} f''(y_0). 
\label{eq:abar_quad}
\ee
Solving the quadratic in $\bar{\alpha}$, we get
\be
\bar{\alpha} = \frac{- R + [R^2 + 2(\rho^2-D)^2(R - \frac{1}{2} \ln \frac{\rho^2}{D}) f''(y_0)]^{1/2}}{(\rho^2-D)^2 f''(y_0)}.
\ee
Using this in \eqref{eq:rabar_bnd1}, we get 
\be
\begin{split}
& R\alpha - \frac{1}{2} \ln \frac{\rho^2}{D}  \\
& =  
\frac{\left(- R + \left[R^2 +2 (\rho^2-D)^2 (R - \frac{1}{2} \ln \frac{\rho^2}{D}) f''(y_0)\right]^{1/2}\right)^2}{2(\rho^2-D)^2 f''(y_0)}.
\end{split}
\label{eq:lb_eqn1}
\ee
The LHS is exactly the quantity we want to bound from below. 
From the definition of $f$ in \eqref{eq:fdef}, the second  partial derivative with respect to $y$ can be computed:
\be
f''(y) = \frac{\partial^2 f }{\partial y^2}\Big{\lvert}_{(\rho^2, y, D)} = \frac{\rho^2}{y^3}  + \frac{1}{y^2} - \frac{y}{2(y^2 + 4\rho^2 D)^{3/2}}.
\label{eq:f_dpr}
\ee 
The RHS of \eqref{eq:f_dpr} is strictly decreasing in $y$. We can therefore bound  $f''(y_0)$ as 
\be
\begin{split}
\frac{\rho^2}{(\rho^2-D)^3} < f''(\rho^2-D) < f''(y_0) &  < f''\left(\frac{\rho^2-D}{2}\right)  \\
& < \frac{12 \rho^2}{(\rho^2-D)^3}.
\end{split}
\ee
Substituting these bounds in \eqref{eq:lb_eqn1}, we conclude that for $D_\alpha \leq D$,
\be
\begin{split}
&g(D_\alpha) = R\alpha - \frac{1}{2} \ln \frac{\rho^2}{D}  \\
&  \geq  \, c_1:= \frac{(\rho^2-D)}{24 \rho^2}\left( -R + \left[ R^2  + \frac{2\rho^2 (R - \frac{1}{2} \ln \frac{\rho^2}{D})}{(\rho^2-D)} \right]^{\frac{1}{2}}\right)^2.
\end{split}
\label{eq:c1_def}
\ee

\section*{Acknowledgement}
We thank the anonymous referee for comments which helped improve the paper.


\end{document}